\newtheorem{theorem}{Theorem}
\newtheorem{lemma}{Lemma}
\newtheorem{proposition}{Proposition}
\newtheorem{corollary}{Corollary}
\newtheorem{definition}{Definition}
\begin{document}

%: A Game-Theoretical Approach
\title{{Distributed Power Splitting for SWIPT in Relay Interference Channels using Game~Theory%: A Game-Theoretical Approach
}
\thanks{The work was supported by the Australian Research Council (ARC) under
Grants DP120100190 and FT120100487, International Postgraduate Research Scholarship (IPRS), Australian Postgraduate Award (APA), and Norman I Price Supplementary Scholarship.}
\thanks{Part of this work was presented at the IEEE International Symposium on Information Theory (ISIT), Honolulu, HI, USA, July 2014 \cite{Chen_ISIT_2014_A}.}
\thanks{H. Chen, Y. Li, Y. Ma, and B. Vucetic are with the School of Electrical and Information Engineering, The University of Sydney, Sydney, NSW 2006, Australia (email: he.chen@sydney.edu.au, yonghui.li@sydney.edu.au, yuanye.ma@sydney.edu.au, branka.vucetic@sydney.edu.au).}
\thanks{Y. Jiang is with Department of Electronic and Information Engineering, Hong Kong Polytechnic University, Hong Kong, China (email:yunxiang.jiang@connect.polyu.hk).}
}

\author{He~(Henry)~Chen,~\IEEEmembership{Student Member,~IEEE,}
Yonghui Li,~\IEEEmembership{Senior Member,~IEEE,} \\Yunxiang Jiang,
Yuanye Ma, and Branka Vucetic,~\IEEEmembership{Fellow,~IEEE}
}

\maketitle

\begin{abstract}
In this paper, we consider simultaneous wireless information and power transfer (SWIPT) in relay interference channels, where multiple source-destination pairs communicate through their dedicated energy harvesting relays. Each relay needs to split its received signal from sources into two streams: one for information forwarding and the other for energy harvesting. We develop a distributed power splitting framework using game theory to derive a profile of power splitting ratios for all relays that can achieve a good network-wide performance. Specifically, non-cooperative games are respectively formulated for pure amplify-and-forward (AF) and decode-and-forward (DF) networks, in which each link is modeled as a strategic player who aims to maximize its own achievable rate. The existence and uniqueness for the Nash equilibriums (NEs) of the formulated games are analyzed and a distributed algorithm with provable convergence to achieve the NEs is also developed. Subsequently, the developed framework is extended to the more general network setting with mixed AF and DF relays. All the theoretical analyses are validated by extensive numerical results. Simulation results show that the proposed game-theoretical approach can achieve a near-optimal network-wide performance on average, especially for the scenarios with relatively low and moderate interference.
\end{abstract}
% Note that keywords are not normally used for peerreview papers.
\begin{IEEEkeywords}
Simultaneous wireless information and power transfer (SWIPT), RF energy harvesting, relay interference channel, distributed power splitting, game theory, Nash equilibrium.
\end{IEEEkeywords}

\IEEEpeerreviewmaketitle

\section{Introduction}
Harvesting energy from the environment has been regarded as a promising technique to prolong the lifetime of energy constrained wireless networks, e.g., sensor networks. Apart from the conventional renewable energy sources such as solar and wind, radio frequency (RF) signals radiated by ambient transmitters can be treated as a viable new source for wireless power transfer. In addition, RF signals have been widely used for wireless information transmission. Therefore, simultaneous wireless information and power transfer (SWIPT) \cite{Varshney_ISIT_2008} has become a promising approach to enable
%becomes attractive since it is able to realize the dual utilization of the same RF signals for
new ways of information and energy delivery. The basic concept of SWIPT was first proposed in \cite{Varshney_ISIT_2008}. A comprehensive receiver architecture and the corresponding rate-energy tradeoff were developed in \cite{Zhou_J_2012}. Inspired by these two seminal results, SWIPT schemes for wireless networks have attracted a lot of attention recently. SWIPT schemes for MIMO broadcasting channels and MISO interference channels were designed and evaluated in \cite{Zhang_TWC_2013_MIMO} and \cite{Timotheou_TWC_2013}, respectively. The resource allocation algorithms for SWIPT in broadband wireless systems were investigated in \cite{Huang_TSP_2013}, while an energy-efficient resource allocation algorithm was developed in \cite{Ng_TWC_2013} for SWIPT in OFDMA systems.

Recently, Nasir \emph{et al.} extended the idea of SWIPT to a simple three-node relay network in \cite{Zhou_TWC_2013}, where the energy constrained relay node harvests energy from the source signal to enable forwarding the received signal. Two practical relaying protocols were proposed for the considered relay network. Analytical expressions for the outage probability and the ergodic capacity of the proposed protocols were derived for delay-limited and delay-tolerant modes, respectively. This work was further extended in \cite{Nasir_Tcom_2013}, by considering an adaptive time-switching protocol for SWIPT relaying network and analytical expressions of the achievable throughput were derived for both amplify-and-forward and decode-and-forward relaying networks. A similar idea for SWIPT in two-way relaying networks was proposed and analyzed in \cite{Chen_ICC_2014}. Very recently, \cite{Ding_arXiv_2013} considered SWIPT in a larger relay network, where multiple source-destination pairs communicate with each other via a common energy harvesting relay. Specifically, several power allocation schemes were proposed to efficiently distribute the power harvested at the relay among multiple pairs.

In cooperative networks, the relay interference channel is an important model that has many practical applications, such as cellular networks, wireless sensor networks, WLANs etc. \cite{Simeone_Conf_2007,Thejaswi_Conf_2008,Cao_Conf_2009, Shi_TWC_2009,Yi_TWC_2012,Truong_EJASP_2013}. In this model, multiple source-destination pairs communicate with the help of dedicated relays using the same spectral resources\footnote{Such a model is also called two-hop interference channels since the multiple source-relay-destination links, which constitute a cascade of two interference channels, transmit and interfere in each hop.}. However, to the best of our knowledge, no work in the open literature has considered SWIPT in relay interference channels except \cite{Krikidis_Tcom_2013}. Note that although \cite{Zhou_TWC_2013,Nasir_Tcom_2013,Chen_ICC_2014,Ding_arXiv_2013} also designed SWIPT for relay networks, none of them considered relay interference channels. By using the advanced stochastic geometry theory, \cite{Krikidis_Tcom_2013} analyzed the outage performance of SWIPT in large-scale networks with/without relaying, where the transmitters and the relays (if they exist) are assumed to be connected to a power supply, while the receivers harvest the energy from the signals received from the source and relay based on a power splitting technique \cite{Zhou_J_2012}.

In this paper, we also focus on the design of SWIPT in relay interference channels. Different from \cite{Krikidis_Tcom_2013} in the model, objective, and approach, we consider that multiple source-destination pairs communicate simultaneously with the help of their dedicated \emph{RF energy harvesting} relays, which do not have their own power supply and need to harvest energy from the source signal before forwarding. Each relay node splits the signal received from all source nodes into two parts according to a power splitting ratio: one part is sent to the information processing unit, and the rest is used to harvest energy for forwarding the received information in the second time slot. We consider that each link's performance is characterized by its achievable rate and thus regard the sum-rate of all links as a network-wide performance metric. The first natural question that arises from this system is ``\emph{how should the relays split their received signals for information receiving and energy harvesting in order to achieve a good network-wide performance?}". This is actually a very complex question to answer. The reason is that the power splitting ratio of each link not only affects the performance of this link, but also affects the performance of other links due to mutual interference between different links. This means that the optimization of each ratio depends on all other ratios and they are tangled together. Moreover, the maximization of the sum-rate of all links is shown to be a non-convex optimization problem. The global optimal power splitting ratios cannot be efficiently achieved even in a centralized fashion, and there is a heavy signaling overhead required by the centralized method.

To tackle the aforementioned problem, we apply the well-established game theory to develop a distributed power splitting framework for SWIPT in relay interference channels. We investigate both pure and hybrid networks in this paper. In a pure network, all relays adopt the same relaying protocol. Considering that amplify-and-forward (AF) relaying and decode-and-forward (DF) relaying protocols are most-frequently used in practice \cite{Yonghui_book_2010}, we further classify a pure network into a pure AF network and a pure DF network. On the other hand, in a hybrid network, a mixture of AF and DF relaying protocols are implemented at the relays. To the best of our knowledge, this is the first game-theoretical framework for the design of SWIPT in relay interference channels. The main contributions of the paper are summarized as follows:
\begin{itemize}
  \item We develop a distributed power splitting framework for the SWIPT in relay interference channels. In particular, each source-relay-destination link in the relay interference channels is modeled as a strategic player who chooses its dedicated relay's power splitting ratio to maximize its individual~rate.
  \item We analyze the existence and uniqueness of the Nash equilibrium (NE) for the formulated game in the pure network, where all relays employ either AF or DF relaying protocol. In addition, a distributed algorithm is proposed with provable convergence to achieve the NEs.
  \item The theoretical analysis for the pure networks is then extended to a more general hybrid network with mixed AF and DF relays coexisting.
  \item All analytical results are validated by extensive numerical simulations, which show that the proposed game-theoretical approach can achieve a near-optimal network-wide performance on average.
\end{itemize}

The remainder of this paper is organized as follows. Section II describes the system model. In Section III, we present the proposed game-theoretical power splitting framework for the pure networks, where the non-cooperative games are formulated for the pure AF and pure DF networks, followed by the existence and uniqueness analysis of the NEs as well as the development of the distributed algorithm. The extension of the proposed framework to the hybrid network is discussed in Section IV. Numerical results are provided in Section V and conclusions are drawn in Section VI.

\section{System Model}

We consider SWIPT in a system with relay interference channels, as depicted in Fig. \ref{fig1}. The system consists of $N$ source-relay-destination ($S$-$R$-$D$) links and the set of these links is denoted as ${\mathcal N} = \left\{ {1, \ldots ,N} \right\}$. More specifically, in the link ${S_i} \to {R_i} \to {D_i}$, $i \in {\mathcal N} $, the source $S_i$ communicates with its corresponding destination $D_i$, assisted by a dedicated relay $R_i$. The relay nodes can employ either AF or DF relaying schemes \cite{Nosratinia_CM_2004}. The direct source-destination channels are neglected due to a high path loss and shadowing attenuation. Since these two-hop links share the same spectrum, they interfere with each other over the dual hops.

\begin{figure}
\centering \scalebox{0.15}{\includegraphics{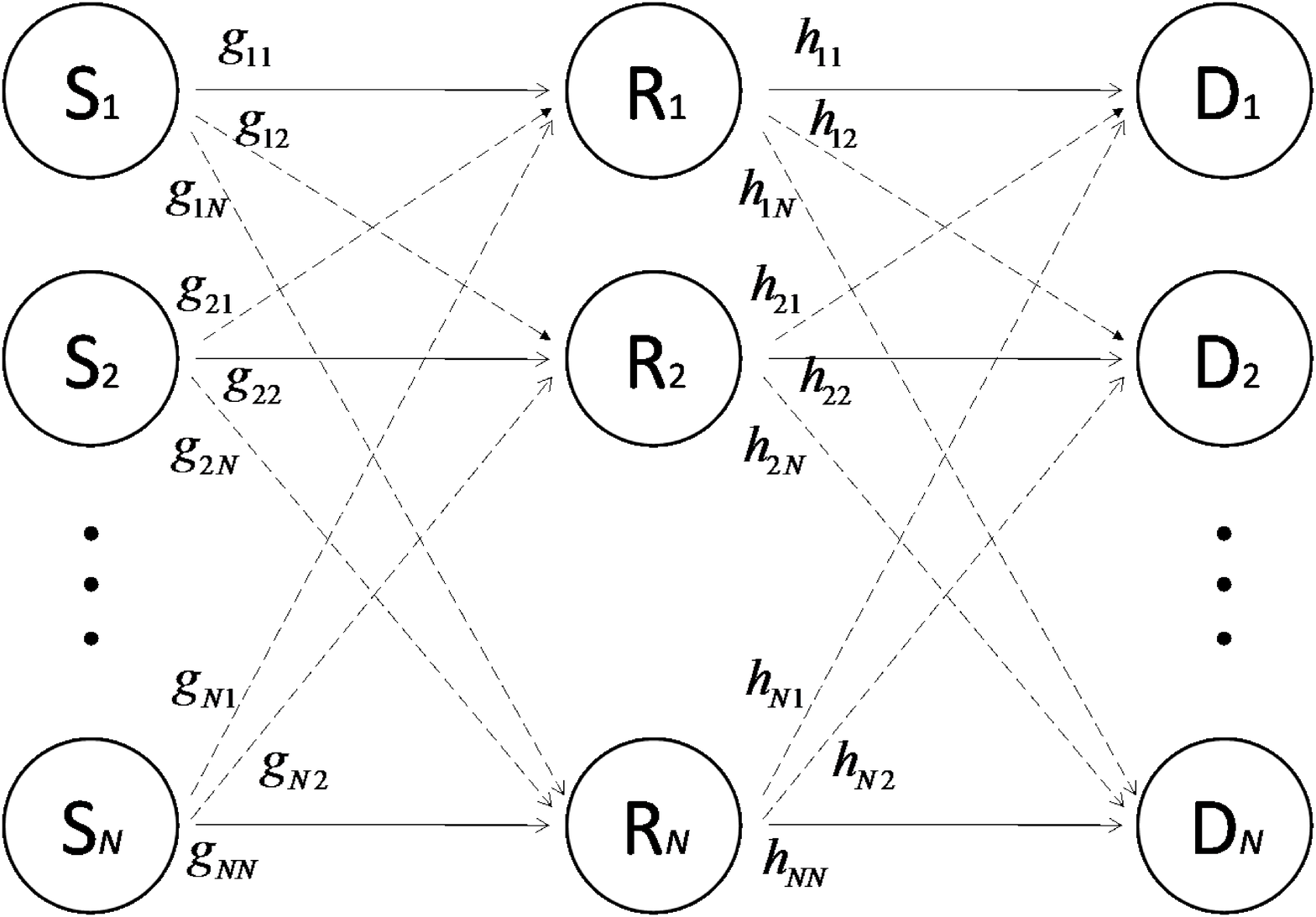}}
\caption{System model for interference relay channels. \label{fig1}}
\end{figure}

We assume that all the nodes (i.e., sources, relays and destinations) are equipped with only one antenna and operate in a half-duplex
mode. The relay nodes do not have their own power supply and need to harvest energy from the
received signal in order to forward the received signal to the destinations. It is assumed that the energy harvesting and information transmission are implemented for every received message block. For the purpose of exposition, \emph{the processing power consumed by the transmit/receive circuitry at the relay nodes is assumed to be negligible as compared to the power used for signal forwarding} \cite{Zhou_TWC_2013}. Moreover, we consider that all links experience slow and frequency-flat fading.

Let $g_{ij}$ and $h_{ij}$ denote the channel gain from $S_i$ to $R_j$ in the first hop and from $R_i$ to $D_j$ in the second hop, respectively. In the first time slot, all sources transmit simultaneously and the signal received by the relay $R_i$ can be written as
\begin{equation}\label{eq:received_signal_relay}
{y_{{R_i}}} = \sqrt {{P_{{i}}}} {g_{ii}}{x_i} + \sum\nolimits_{j = 1,j \ne i}^N {\sqrt {{P_{{j}}}} {g_{ji}}{x_j} }+ n_i^a,
\end{equation}
where $P_i$ and $x_i$ are the fixed transmit power and the transmitted information of the source $S_i$ with $\mathbb{E}\left\{ \left| x_i \right|^2 \right\} = 1$, $n_i^a \sim {\mathcal{CN}}\left(0,\sigma _{i,a}^2\right)$ is the additive noise introduced by the receiver antenna at the relay $R_i$.

Subsequently, the received signal at the relay $R_i$ is split into two streams, with the power splitting ratio $\rho_i$, as shown in Fig.~\ref{fig2}. The fraction $\sqrt{\rho_i}$ of the received signal is used for energy harvesting, while the remaining one is sent to the information processing unit. In practice, the antenna noise $n_i^a$ has a negligible impact on both the information receiving and energy harvesting, since $\sigma _{i,a}^2$ is generally much smaller than the noise power introduced by the baseband processing circuit, and thus even lower than the average power of the received signal \cite{Liu_Tcom_2013}. For simplicity, we ignore the noise term $n_i^a$ in the following analysis, i.e., setting $\sigma _{i,a}^2 = 0$ \cite{Liu_Tcom_2013}. For the sake of simplicity, we assume a normalized transmission time for each hop (i.e., the transmission duration of each hop is equal to one). Then, the terms ``energy" and ``power" can be used interchangeably. In this case, the energy harvested at relay $R_i$ can thus be expressed as
\begin{equation}\label{eq:energy_harvested_relay}
{Q_i} = \eta {\rho _i}%\left( {{P_i}{{\left| {{g_{ii}}} \right|}^2} + \sum\nolimits_{j = 1,j \ne i}^N { {P_j}{{\left| {{g_{ji}}} \right|}^2}}          } \right),
\sum\nolimits_{n = 1}^N { {P_n}{{\left| {{g_{ni}}} \right|}^2}},
\end{equation}
where $0< \eta \le 1$ is the energy conversion efficiency that depends on the rectification process and the energy harvesting circuit. Meanwhile, the information signal received by the information processing unit at relay $R_i$ is given by
\begin{equation}\label{eq:information_signal_relay}
\begin{split}
y_{{R_i}}^I =& \sqrt {1 - {\rho _i}} {y_{{R_i}}} + n_i^b\\
=& \sqrt {1 - {\rho _i}} \sqrt {{P_i}} {g_{ii}}{x_i} + \sqrt {1 - {\rho _i}}\times\\ &\sum\nolimits_{j = 1,j \ne i}^N {\sqrt {{P_j}} {g_{ji}}{x_j}}  + n_i^b,
\end{split}
\end{equation}
where $n_i^b \sim {\mathcal{CN}}\left(0,\sigma _{{R_i}}^2\right)$ is the additive white Gaussian noise introduced by the signal processing circuit from passband to baseband. Then, in the second time slot, the relay nodes will exhaust\footnote{Generally, the relay nodes may be interested in keeping part of the energy harvested from the RF signals. In this paper, we consider the relay protocol to maximize achievable rate of each link. In this regard, the relay should exhaust the harvested energy to forward the source information and thus has no incentive to keep any part of energy.} the harvested energy to forward the information signal $y_{{R_i}}^I$ by employing either AF or DF relaying protocol. In the following subsections, we derive the expressions for the achievable rates of the $i$th link for the AF and DF relaying protocols, respectively.
\begin{figure}
\centering \scalebox{0.35}{\includegraphics{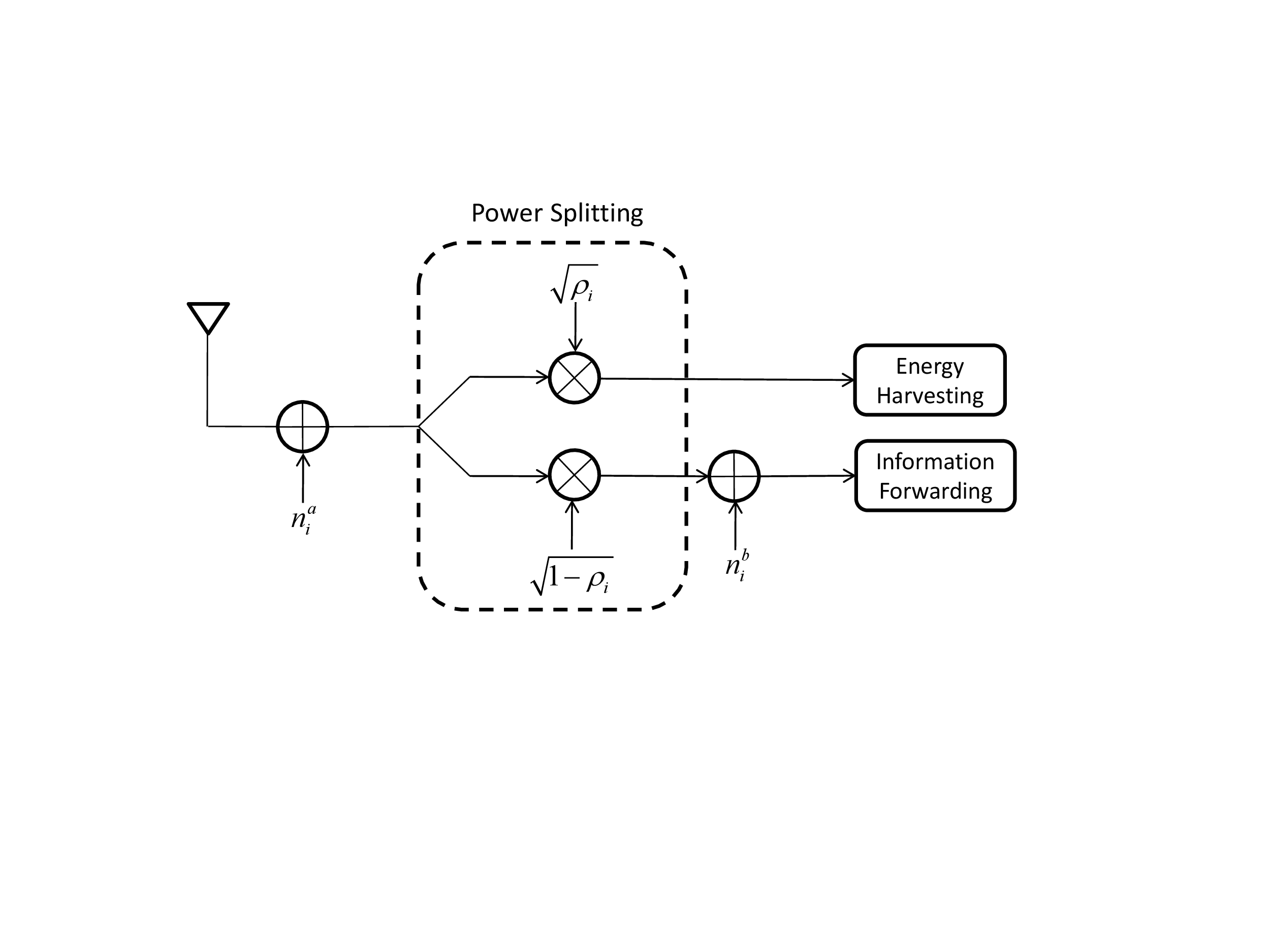}}
\caption{The diagram of the power splitting technique at the relay nodes. \label{fig2}}
\end{figure}

\subsection{AF Relaying}
When the AF relaying scheme is adopted, the relay node $R_i$ will exhaust the harvested energy to amplify and forward the signal received by the information processing unit in the first time slot. Thus, the transmit power of the relay $R_i$ is $Q_i$ and the received signal at the destination $D_i$ can be expressed as \cite{Yonghui_book_2010}
\begin{equation}\label{eq:destination_signal_AF}
\begin{split}
y_{{D_i}}^{AF} =& \sqrt {{Q_i}} {h_{ii}}{\beta _i}y_{{R_i}}^I + \sum\nolimits_{j = 1,j \ne i}^N {\sqrt {{Q_j}} {h_{ji}}{\beta _j}y_{{R_j}}^I}  + {n_{{D_i}}}\\
=& \sqrt {{Q_i}} {h_{ii}}{\beta _i}\sqrt {1 - {\rho _i}} \sqrt {{P_i}} {g_{ii}}{x_i} +\\ &\sqrt {{Q_i}} {h_{ii}}{\beta _i}\sqrt {1 - {\rho _i}} \sum\nolimits_{j = 1,j \ne i}^N {\sqrt {{P_j}} {g_{ji}}{x_j}}+\\
& \sqrt {{Q_i}} {h_{ii}}{\beta _i}n_i^b +\sum\nolimits_{j = 1,j \ne i}^N {\sqrt {{Q_j}} {h_{ji}}{\beta _j}y_{{R_j}}^I}  + {n_{{D_i}}},
\end{split}
\end{equation}
where
\[{\beta _i} = 1/\sqrt {\left( {1 - {\rho _i}} \right){P_i}{{\left| {{g_{ii}}} \right|}^2} + \left( {1 - {\rho _i}} \right)\sum\limits_{j = 1,j \ne i}^N {{P_j}{{\left| {{g_{ji}}} \right|}^2}}  + \sigma _{{R_i}}^2} \]
is the power constraint factor at the relay $R_i$, and ${n_{{D_i}}}\sim {\mathcal{CN}}\left(0,\sigma _{{D_i}}^2\right)$ is the additive white noise at the destination $D_i$. Without loss of generality, we hereafter assume that $\sigma _{{R_i}}^2 = \sigma _{{D_i}}^2 = \sigma ^2$, for any $i \in {{\mathcal N}}$. The second equality of (\ref{eq:destination_signal_AF}) is obtained by inserting the expression of $y_{{R_i}}^I$ given by (\ref{eq:information_signal_relay}) into the first equality of (\ref{eq:destination_signal_AF}). Note that only the first term on the right-hand side of the second equality in (\ref{eq:destination_signal_AF}) is the useful signal to the destination $D_i$, while the remaining terms should be regarded as interference plus noise. Based on this observation and after some algebraic manipulations, we can write the end-to-end signal-to-interference-plus-noise ratio (SINR) of the $i$th link as (\ref{eq:SINR_AF}) on top of next page,
\begin{figure*}
\begin{equation}\label{eq:SINR_AF}
\gamma _i^{AF} = \frac{{{\rho _i}\left( {1 - {\rho _i}} \right){X_i}{Z_i}}}{{{\rho _i}\left( {1 - {\rho _i}} \right){Y_i}{Z_i} + \left( {1 - {\rho _i}} \right)\left( {{X_i} + {Y_i}} \right)\left( {{W_i} + 1} \right) + {\rho _i}{Z_i} + {W_i} + 1}}%,
\end{equation}
\hrulefill \vspace*{4pt}
\end{figure*}
where
\begin{subequations}\label{eq:def_XYZW}
\begin{align}
&{X_i} = {P_i}{\left| {{g_{ii}}} \right|^2}/{\sigma ^2}, \label{eq:def_XYZW_a}\\
&{Y_i} = \sum\nolimits_{j = 1,j \ne i}^N {{P_j}{{\left| {{g_{ji}}} \right|}^2}} /{\sigma ^2},\label{eq:def_XYZW_b}\\
&{Z_i} = \eta \left( {\sum\nolimits_{n = 1}^N {{P_n}{{\left| {{g_{ni}}} \right|}^2}} } \right){\left| {{h_{ii}}} \right|^2}/{\sigma ^2},\label{eq:def_XYZW_c}\\
&{W_i} = \sum\nolimits_{j = 1,j \ne i}^N {{\rho _j}\eta \left( {\sum\nolimits_{n = 1}^N {{P_n}{{\left| {{g_{nj}}} \right|}^2}} } \right){{\left| {{h_{ji}}} \right|}^2}}/ {\sigma ^2},\label{eq:def_XYZW_d}
\end{align}
\end{subequations}
are defined for the simplicity of notations. It is worth noticing that the above equations (\ref{eq:def_XYZW}a)-(\ref{eq:def_XYZW}d) have physical meanings. More specifically, (\ref{eq:def_XYZW}a) and (\ref{eq:def_XYZW}b) respectively denote the signal-to-noise ratio (SNR) and the interference-to-noise ratio (INR) at the relay $R_i$ when the received signal is fully forwarded to $D_i$ without harvesting any energy at the relay (i.e., $\rho_i = 0$). On the other hand, (\ref{eq:def_XYZW}c) represents the SNR at the destination $D_i$ when the received signal of the relay $R_i$ is fully used for energy harvesting (i.e., $\rho_i = 1$). Finally, (\ref{eq:def_XYZW}d) is the INR at the destination $D_i$.

Then, the achievable rate of the link $i$ when the AF relaying technique is employed at its dedicated relay can be expressed as \begin{equation}\label{eq: utiliy_AF}
{u_i^{AF}}\left( \pmb \rho  \right) = \frac{1}{2}\log \left( {1 + \gamma _i^{AF}} \right),
\end{equation}
where ${\pmb \rho} = \left[ {{\rho _1}, \ldots ,{\rho _N}} \right]^T$ denotes the vector of all links' power splitting ratios.

\subsection{DF Relaying}
For the case when the DF relaying protocol is employed, the relay node will first decode the information based on the received information signal $y_{{R_i}}^I$ given in (\ref{eq:information_signal_relay}). Thus, the received SINR at relay $R_i$ can be written as
\begin{equation}\label{eq:DF_SNR_1st_hop}
\gamma _{i,1}^{DF} = \frac{{\left( {1 - {\rho _i}} \right)X_i}}{{\left( {1 - {\rho _i}} \right)Y_i + 1}},
\end{equation}
where $X_i$ and $Y_i$ are defined in (\ref{eq:def_XYZW_a}) and (\ref{eq:def_XYZW_b}), respectively.

In the second time slot, the relay nodes forward the decoded information to their corresponding destinations using the energy harvested in the first time slot. The received signal at $D_i$ is given by
\begin{equation}\label{eq:received_signal_destination}
{y_{{D_i}}} = \sqrt {{Q_i}} {h_{ii}}{{ x}_i} + \sum\nolimits_{j = 1,j \ne i}^N { \sqrt {{Q_j}} {h_{ji}}{{ x}_j} }+ {n_{{D_i}}}.
\end{equation}
The received SINR at the destination $D_i$ can thus be written as
\begin{equation}\label{eq:DF_SNR_2rd_hop}
\begin{split}
\gamma _{i,2}^{DF} = \frac{{ {\rho _i}Z_i}}{W_i + 1},
\end{split}
\end{equation}
where $Z_i$ and $W_i$ are defined in (\ref{eq:def_XYZW_c}) and (\ref{eq:def_XYZW_d}), respectively. The achievable rate of the $i$th link in this case is thus given by
\begin{equation}\label{eq:utility_DF}
\begin{split}
u_i^{DF}\left( \pmb \rho  \right) &=
\frac{1}{2}\min \left( {\log \left( {1 + \gamma _{i,1}^{DF}} \right),\log \left( {1 + \gamma _{i,2}^{DF}} \right)} \right)\\
&=\frac{1}{2}\log \left( {1 + \gamma _i^{DF}}\right) ,
\end{split}
\end{equation}
where
\begin{equation}\label{eq:SINR_e2e_DF}
\gamma _i^{DF} = \min \left( {\gamma _{i,1}^{DF},\gamma _{i,2}^{DF}} \right)
\end{equation}
can be regarded as the end-to-end SINR of the $i$th link with a DF relay.

In this paper, we consider that each link's performance is characterized by its achievable rate and thus regard the sum-rate of all links as a network-wide performance metric. In the following sections, we will develop a distributed power splitting scheme to achieve a good network-wide performance.

\section{Distributed Power Splitting for Pure Networks}\label{sec:Pure_AF_Network}
In this section, we focus on the the design of distributed power splitting for pure AF and DF networks, where all the relay nodes employ the same relaying protocol, i.e., either AF or DF relaying. To choose an efficient profile of the power splitting ratios (i.e, $\pmb \rho$) that can achieve a globally optimal network-wide performance, one needs to solve the following network utility maximization problem:
\begin{equation}\label{eq:centralized_optimization_AF}
\begin{array}{l}
 \mathop {\max }\limits_{\pmb \rho}  \sum\nolimits_{i = 1}^N {u_i^{X}\left( {\pmb \rho}  \right)}  \\
 ~~~{\rm {s.t.}}\;\;\rho  \in {\mathcal A} \\
\end{array},
\end{equation}
where $X$ refers to $AF$ ($DF$) for the pure AF (DF) network, ${{u_i^{AF}}\left( {\pmb \rho}  \right)}$ and ${{u_i^{DF}}\left( {\pmb \rho}  \right)}$ are respectively defined in (\ref{eq: utiliy_AF}) and (\ref{eq:utility_DF}), and ${\mathcal A} = \left\{ {\left. {\pmb \rho}  \right|0 \le {\rho _i} \le 1,\;\forall i \in {\mathcal N}} \right\}$ is the feasible set of ${\pmb \rho}$.

However, it can be easily checked that the optimization problem in (\ref{eq:centralized_optimization_AF}) is not convex for an AF network. Moreover, for a DF network, the optimization problem in (\ref{eq:centralized_optimization_AF}) is not only non-convex but also non-differentiable due to the $\min$ operator. This means that the globally optimal power splitting profile for the pure network (i.e, the solution of (\ref{eq:centralized_optimization_AF})) cannot be efficiently calculated even in a centralized fashion and there will be a heavy signaling overhead required by the centralized method. Motivated by this, we will develop a distributed framework by considering that all links are strategic and they aim to maximize their individual achievable rates by choosing their own power splitting ratios. For example, in an AF network, this will involve the $i$th link solving the following optimization problem
\begin{equation}\label{eq:distributed_optimization_AF}
\begin{array}{l}
\mathop {\max }\limits_{{\rho _i}} \;\;\;{u_i^{AF}}\left( {{\rho _i},{\pmb \rho _{ - i}}} \right)\\
~~~~{\rm{s.t.}}\;\;\;{\rho _i} \in {{\mathcal A}_i}
\end{array},
\end{equation}
where ${{\pmb \rho} _{ - i}} = {\left[ {{\rho _1}, \ldots ,{\rho _{i - 1}},{\rho _{i + 1}}, \ldots {\rho _N}} \right]^T}$ denotes the vector of all links' power splitting ratios, except the $i$th one, and ${\cal A}_i = \left\{ {\left. \rho_i  \right|0 \le {\rho _i} \le 1} \right\}$ is the feasible set of the $i$th link's power splitting ratio.

We can observe from (\ref{eq:distributed_optimization_AF}) that the optimization problem to be solved by each link is coupled together due to the mutual interference over two hops. To solve this problem, we model the considered power splitting problem to be a non-cooperative game in game theory \cite{Fudenberg_Game_book}.
Particularly, the considered power splitting problem for an AF network can be modeled by the following non-cooperative game:
\begin{itemize}
\item \emph{{Players}}: The $N$ $S$-$R$-$D$ links.
\item \emph{{Actions}}: Each link determines its power splitting ratio $\rho_i \in {\mathcal{A}}_i$ to maximize the achievable rate for its own link.
\item \emph{{Utilities}}: The achievable rate ${u_i^{AF}}\left( {{\rho _i},{\pmb \rho _{ - i}}} \right)$ defined in (\ref{eq: utiliy_AF}).
\end{itemize}

For convenience, we denote the formulated non-cooperative game as
\begin{equation}\label{eq:AF_game}
{\mathcal{G}}_{AF} = \left\langle {\mathcal{N},\left\{ \mathcal{A}_i  \right\},\left\{{{u}_i^{AF}}\left( {{\rho_i},{{{\pmb \rho}}_{ -i}}} \right)\right\} }\right\rangle.
\end{equation}
Note that we regard each link consisting of three nodes as a ``virtual" single player for the sake of presentation. In practice, each player is supposed to be one node of each link (e.g., relay) that acts as the coordinator of each link.

Similarly, we can formulate the following non-cooperative game for the DF network:
\begin{equation}\label{eq:game_DF}
{\mathcal{G}}_{DF} = \left\langle {\mathcal{N},\left\{ \mathcal{A}_i  \right\},\left\{{{u}_i^{DF}}\left( {{\rho_i},{{\pmb{\rho}}_{ -i}}} \right)\right\} }\right\rangle.
\end{equation}
It is worth mentioning that although the structure of the games formulated for the AF and DF networks is similar, their solution analyses are actually quite different. So we discuss them separately in the following subsections.

%\subsection{Solution of the Formulated Game}
\subsection{Existence of the Nash Equilibrium}
The most well-known solution to the non-cooperative games is the (pure strategy) Nash equilibrium (NE) \cite{Fudenberg_Game_book}. A NE of a given non-cooperative game $\left\langle  {{\cal N},\{ {{\bf{{\cal Q}}}_n}\} ,\left\{ {{{\cal U}_n}\left( {{{\bf{x}}_n},{{\bf{x}}_{ - n}}} \right)} \right\}} \right\rangle $ is a feasible point ${\bf{x}}^*$ such that
\begin{equation}
{\mathcal U}\left( {{\bf{x}}_n^*,{\bf{x}}_{ - n}^*} \right) \ge {\mathcal U}\left( {{{\bf{x}}_n},{\bf{x}}_{ - n}^*} \right),\;\forall {{\bf{x}}_n} \in {{\mathcal Q}_n}.
\end{equation}
In other words, a NE is a feasible strategy profile with the property that no single player can increase the utility by deviating from the strategy corresponding to the equilibrium, given the strategies of the other players. The following theorem proposed in \cite{Glicksberg_AMS_1952} is usually adopted to verify the existence of the NE:
\begin{theorem}\label{thm:existence_NE}
A NE \emph{exists} in the game $\left\langle  {{\cal N},\{ {{\bf{{\cal Q}}}_n}\} ,\left\{ {{{\cal U}_n}\left( {{{\bf{x}}_n},{{\bf{x}}_{ - n}}} \right)} \right\}} \right\rangle $ if $\forall n \in {\cal N}$, ${{\bf{{\cal Q}}}_n}$ is a \emph{compact and convex} set; ${{{\cal U}_n}\left(  {\bf{q}}\right)}$ is continuous in ${\bf{q}}$ and \emph{quasi-concave} in ${{\bf{q}}_n}$, where ${\bf{q}} = \left({{{\bf{q}}_n},{{\bf{q}}_{ - n}}}\right)$.
\end{theorem}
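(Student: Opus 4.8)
The plan is to establish existence through a fixed-point argument applied to the joint best-response correspondence. First I would assemble the joint strategy space $\mathcal{Q} = \prod_{n \in \mathcal{N}} \mathcal{Q}_n$, which is compact and convex since it is a finite Cartesian product of compact convex sets, and for each player $n$ define the best-response correspondence
\[
B_n(\mathbf{q}_{-n}) = \arg\max_{\mathbf{x}_n \in \mathcal{Q}_n} \mathcal{U}_n(\mathbf{x}_n, \mathbf{q}_{-n}),
\]
together with the product map $B(\mathbf{q}) = \prod_{n \in \mathcal{N}} B_n(\mathbf{q}_{-n})$ sending $\mathcal{Q}$ into its power set. The key observation is that a point $\mathbf{q}^*$ satisfies $\mathbf{q}^* \in B(\mathbf{q}^*)$ if and only if $\mathbf{q}_n^*$ maximizes $\mathcal{U}_n(\cdot, \mathbf{q}_{-n}^*)$ over $\mathcal{Q}_n$ for every $n$, which is exactly the defining inequality of a NE. The whole problem therefore reduces to verifying the hypotheses of Kakutani's fixed-point theorem for $B$.

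Next I would check the three required properties of each $B_n$ in turn. Non-emptiness is immediate from the Weierstrass extreme-value theorem: for fixed $\mathbf{q}_{-n}$ the map $\mathbf{x}_n \mapsto \mathcal{U}_n(\mathbf{x}_n, \mathbf{q}_{-n})$ is continuous on the compact set $\mathcal{Q}_n$, so a maximizer exists. Convex-valuedness follows from quasi-concavity: the set of maximizers is a superlevel set of a quasi-concave function intersected with the convex set $\mathcal{Q}_n$, hence convex. These two steps are routine and use precisely the compactness, convexity, continuity, and quasi-concavity assumptions in the statement.

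The delicate step, and the one I expect to be the main obstacle, is proving that each $B_n$ has a closed graph (equivalently, is upper hemicontinuous on the compact space $\mathcal{Q}$). I would obtain this from Berge's maximum theorem: since $\mathcal{U}_n$ is \emph{jointly} continuous in $\mathbf{q} = (\mathbf{q}_n, \mathbf{q}_{-n})$ and the feasible set $\mathcal{Q}_n$ does not depend on $\mathbf{q}_{-n}$, the maximizer correspondence $\mathbf{q}_{-n} \mapsto B_n(\mathbf{q}_{-n})$ is upper hemicontinuous with closed graph. To avoid invoking Berge directly, I would instead argue by sequences: given $\mathbf{q}_{-n}^{(k)} \to \mathbf{q}_{-n}$ and $\mathbf{x}_n^{(k)} \in B_n(\mathbf{q}_{-n}^{(k)})$ with $\mathbf{x}_n^{(k)} \to \mathbf{x}_n$, the optimality inequalities $\mathcal{U}_n(\mathbf{x}_n^{(k)}, \mathbf{q}_{-n}^{(k)}) \ge \mathcal{U}_n(\mathbf{y}_n, \mathbf{q}_{-n}^{(k)})$ hold for every competitor $\mathbf{y}_n \in \mathcal{Q}_n$, and letting $k \to \infty$ while using joint continuity yields $\mathcal{U}_n(\mathbf{x}_n, \mathbf{q}_{-n}) \ge \mathcal{U}_n(\mathbf{y}_n, \mathbf{q}_{-n})$, so $\mathbf{x}_n \in B_n(\mathbf{q}_{-n})$ and the graph is closed. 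Finally, the product map $B$ inherits non-emptiness, convex-valuedness, and the closed-graph property from its factors, so Kakutani's theorem produces a fixed point $\mathbf{q}^*$, which by the equivalence noted above is a NE of the game.
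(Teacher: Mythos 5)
Your proof is correct, but note that the paper does not actually prove this statement: Theorem~1 is quoted verbatim from the literature (attributed to Glicksberg, 1952) and used as a black box to establish Proposition~1, so there is no in-paper argument to compare against. What you have written is the canonical Debreu--Fan--Glicksberg existence proof: reduce a NE to a fixed point of the joint best-response correspondence, verify non-emptiness (Weierstrass), convex-valuedness (the maximizer set is a superlevel set of a quasi-concave function), and closed graph (your sequential argument, which correctly uses \emph{joint} continuity of $\mathcal{U}_n$ and the fact that the feasible set $\mathcal{Q}_n$ is fixed), then invoke Kakutani. All three verifications are sound, and your observation that the closed-graph step is the only place joint continuity is genuinely needed is the right one. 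The only caveat worth recording is scope: Kakutani's theorem as you invoke it lives in finite-dimensional Euclidean space, whereas Glicksberg's original result allows strategy sets in locally convex topological vector spaces; for this paper, where each $\mathcal{A}_i = [0,1]$, the finite-dimensional version is all that is required, so your argument fully covers the use made of the theorem here.
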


After investigating the properties of the action sets and the utility functions for the formulated games ${\mathcal{G}}_{AF}$ and ${\mathcal{G}}_{
DF}$, we have the following proposition regrading the existence of the NE:
\begin{proposition}\label{prop:NE_existence_AF_game}
The utility function ${u_i^{AF}}\left( {{\rho _i},{\pmb \rho _{ - i}}} \right)$ is quasi-concave in $\rho_i$ for any $i \in \mathcal{N}$. Moreover, the formulated power splitting game ${\mathcal{G}}_{AF}$ for the AF network possesses at least one NE. Moreover, the formulated power splitting game ${\mathcal G}_{DF}$ for the DF network also admits at least one NE.
\end{proposition}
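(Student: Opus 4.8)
The plan is to verify, for each game, the three hypotheses of Theorem~\ref{thm:existence_NE}. The action set $\mathcal{A}_i=\{\rho_i\mid 0\le\rho_i\le 1\}=[0,1]$ is a compact and convex subset of $\mathbb{R}$ for every $i$, so that requirement holds trivially. Joint continuity of the utilities in $\pmb\rho$ is also routine: on $\mathcal{A}$ the denominator of $\gamma_i^{AF}$ in (\ref{eq:SINR_AF}) is bounded below by $W_i+1\ge 1>0$, so $\gamma_i^{AF}$ is a well-defined rational function and $u_i^{AF}=\tfrac12\log(1+\gamma_i^{AF})$ is continuous, while $u_i^{DF}$ in (\ref{eq:utility_DF}) is the minimum of two continuous functions and hence continuous. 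The entire difficulty therefore reduces to the quasi-concavity of each utility in the player's own variable $\rho_i$, with $\pmb\rho_{-i}$ held fixed.

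For $\mathcal{G}_{AF}$, since $\tfrac12\log(1+\cdot)$ is increasing, $u_i^{AF}$ is quasi-concave in $\rho_i$ if and only if $\gamma_i^{AF}$ is, and I would establish the latter through its superlevel sets. Fixing $\pmb\rho_{-i}$ makes $X_i,Y_i,Z_i,W_i$ constants in $\rho_i$ --- note from (\ref{eq:def_XYZW_d}) that $W_i$ does not involve $\rho_i$ --- so for any $t>0$ the set $\{\rho_i\in[0,1]:\gamma_i^{AF}\ge t\}$ coincides with $\{\rho_i\in[0,1]:q(\rho_i)\ge 0\}$, where $q$ is the quadratic obtained by clearing the positive denominator. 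Expanding gives $q(\rho_i)=Z_i(tY_i-X_i)\rho_i^2+\beta\rho_i-t(W_i+1)(X_i+Y_i+1)$ for a suitable coefficient $\beta$, so that $q(0)<0$ and $q(1)<0$ while the leading coefficient has the sign of $tY_i-X_i$. The decisive case split is on that sign: when $tY_i\le X_i$ the parabola is concave (or $q$ is affine) and its nonnegativity set is automatically an interval; when $tY_i>X_i$ the parabola is convex, but $q(0)<0$ and $q(1)<0$ then force its two real roots to straddle $[0,1]$ (one below $0$, one above $1$), so $q<0$ on all of $[0,1]$ and the superlevel set is empty. In every case (and trivially for $t\le 0$, where the set is all of $[0,1]$ because $\gamma_i^{AF}\ge 0$) the superlevel set is convex, which is exactly quasi-concavity of $\gamma_i^{AF}$ and hence of $u_i^{AF}$.

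For $\mathcal{G}_{DF}$ the argument is shorter and exploits the min-of-monotone structure. With $\pmb\rho_{-i}$ fixed, $\gamma_{i,2}^{DF}=\rho_i Z_i/(W_i+1)$ in (\ref{eq:DF_SNR_2rd_hop}) is affine and nondecreasing in $\rho_i$, whereas $\gamma_{i,1}^{DF}=(1-\rho_i)X_i/((1-\rho_i)Y_i+1)$ in (\ref{eq:DF_SNR_1st_hop}) is nonincreasing in $\rho_i$; composing with the increasing map $\tfrac12\log(1+\cdot)$ preserves these monotonicities. Hence $u_i^{DF}$ is the minimum of a nondecreasing and a nonincreasing function of $\rho_i$, and for any level $t$ its superlevel set is the intersection of an upper interval $\{\log(1+\gamma_{i,2}^{DF})\ge 2t\}$ with a lower interval $\{\log(1+\gamma_{i,1}^{DF})\ge 2t\}$, which is again an interval; thus $u_i^{DF}$ is quasi-concave in $\rho_i$.

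Combining compactness and convexity of $\mathcal{A}_i$, joint continuity of the utilities, and the quasi-concavity just established, Theorem~\ref{thm:existence_NE} applies to both $\mathcal{G}_{AF}$ and $\mathcal{G}_{DF}$, delivering at least one NE in each. The main obstacle is the AF case: whereas the DF utility is visibly unimodal from its min structure, the AF SINR is a genuine rational function of $\rho_i$ whose quasi-concavity is not apparent by inspection, and the crux is to see that clearing the denominator collapses each superlevel set to one quadratic inequality and that the boundary values $q(0),q(1)<0$ eliminate the sole configuration --- a convex parabola with both roots inside $[0,1]$ --- that could destroy convexity of the superlevel set.
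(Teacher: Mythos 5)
Your proof is correct and it shares the paper's overall skeleton (verify the hypotheses of Theorem~\ref{thm:existence_NE}: compact convex action sets, joint continuity, and quasi-concavity in the own variable), but it establishes the decisive quasi-concavity step by a genuinely different route in both cases. For the AF utility, the paper differentiates $u_i^{AF}$ in $\rho_i$ and shows the sign of the derivative is governed by a single quadratic $\kappa_i(\rho_i)=C_i\rho_i^2-2D_i\rho_i+D_i$ with $\kappa_i(0)>0$ and $\kappa_i(1)<0$, so the utility increases up to a root $\epsilon_i\in[0,1]$ and decreases thereafter (Theorem~\ref{thm:quasi-concave}); you instead fix a level $t$, clear the uniformly positive denominator of $\gamma_i^{AF}$, and show each superlevel set is the solution set of one quadratic inequality whose boundary values $q(0)<0$, $q(1)<0$ rule out the only configuration (convex parabola with both roots inside $[0,1]$) that could destroy convexity of that set. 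Both arguments are sound; the paper's derivative route has the side benefit that the root $\epsilon_i$ of $\kappa_i$ is exactly the best response, which is then reused verbatim to prove Lemma~\ref{lemma:BR_AF}, whereas your superlevel-set route avoids differentiating a rational function at the cost of a case split over the level $t$. For the DF utility, the paper proves the stronger fact that $u_i^{DF}$ is concave ($\gamma_{i,1}^{DF}$ and $\gamma_{i,2}^{DF}$ are each concave in $\rho_i$, the minimum of concave functions is concave, and composition with the nondecreasing concave $\log(1+\cdot)$ preserves concavity), while you prove only the needed quasi-concavity from the opposite monotonicities of the two branches; your monotonicity observation is precisely the one the paper defers to Appendix~\ref{append:lemma_BR_function_DF} to locate the DF best response at the crossing point $\gamma_{i,1}^{DF}=\gamma_{i,2}^{DF}$. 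I see no gaps.
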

\begin{proof}
See Appendix \ref{append:prop1}.
\end{proof}

\subsection{Uniqueness for the NE of the game ${\mathcal{G}}_{AF}$}
Once the NE is shown to exist, a natural question that arises is whether it is unique. This is important not only for predicting the state of the network but also crucial for convergence issues.
In principle, the uniqueness of the NE can be analyzed by several methods, which has been summarized in \cite{Lasaulce_SPM_2009}. However, since the formulated game ${\mathcal{G}}_{AF}$ is not a convex one, most of the methodologies cannot be applied except the standard function approach \cite{Yates_JSAC_1995} because, as shown below, it only requires that the best response function satisfies certain properties. To proceed, we first figure out the best response functions of the links (players), for which we have the following lemma:
\begin{lemma}\label{lemma:BR_AF}
Given a power splitting strategy profile $\pmb \rho$, the best response function of the link ${S_i} \to {R_i} \to {D_i}$ in the game ${\mathcal{G}}_{AF}$ can be expressed as (\ref{eq:BR_function_AF}) on top of next page.
\begin{figure*}
\begin{equation}\label{eq:BR_function_AF}
\begin{split}
{\mathcal B}_i^{AF}\left(\pmb \rho  \right) = \left\{ \begin{array}{l}
\frac{1}{2},\;\;\;\;\;\;\;\;\;\;\;\;\;\;\;\;\;\;\;\;\;\;\;\;\;\;\;\;\;\;\;\;\;\;\;\;\;\;\;\;\;\;\;\;\;{\rm{if}}\;\left( {{X_i} + {Y_i}} \right)\left( {{W_i} + 1} \right) = {Z_i} \\
\frac{{\sqrt {{\left( {{X_i} + {Y_i} + 1} \right)\left( {{W_i} + 1} \right)}} }}{{\sqrt {\left( {{X_i} + {Y_i} + 1} \right)\left( {{W_i} + 1} \right)}  + \sqrt {Z_i +W_i+1} }},\;\;{\rm{if}}\;\left( {{X_i} + {Y_i}} \right)\left( {{W_i} + 1} \right) \ne {Z_i} \\
  \end{array} \right..
\end{split}
\end{equation}
\hrulefill \vspace*{4pt}
\end{figure*}
\end{lemma}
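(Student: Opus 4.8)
The plan is to reduce the best-response computation to an unconstrained-in-spirit scalar optimization and then locate the unique interior stationary point in closed form. First I would observe that the logarithm in (\ref{eq: utiliy_AF}) is strictly increasing, so maximizing $u_i^{AF}(\rho_i,\pmb\rho_{-i})$ over $\rho_i$ is equivalent to maximizing the SINR $\gamma_i^{AF}$ in (\ref{eq:SINR_AF}) over $\rho_i\in[0,1]$. The crucial structural point is that once $\pmb\rho_{-i}$ is held fixed, all four aggregate quantities $X_i,Y_i,Z_i,W_i$ defined in (\ref{eq:def_XYZW}) are constants with respect to $\rho_i$; in particular $W_i$ depends only on $\{\rho_j\}_{j\ne i}$ through (\ref{eq:def_XYZW_d}). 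Hence $\gamma_i^{AF}$ is a genuine single-variable rational function of $\rho_i$, namely a ratio of two quadratics, with numerator $\rho_i(1-\rho_i)X_iZ_i$ and denominator as displayed in (\ref{eq:SINR_AF}).

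Next I would differentiate this ratio and set the numerator of the derivative to zero. Writing $N_i$ and $D_i$ for the numerator and denominator, the stationarity condition is $N_i'D_i-N_iD_i'=0$. Although this is a priori a cubic in $\rho_i$, the key simplification is that the $\rho_i^3$ term cancels identically, leaving a quadratic. Carrying out the bookkeeping with the abbreviations $A_i=(X_i+Y_i)(W_i+1)$ and $B_i=W_i+1$, I expect the condition to reduce (up to the positive factor $X_iZ_i$) to
\begin{equation}\label{eq:br_plan_quadratic}
(A_i-Z_i)\rho_i^2-2(A_i+B_i)\rho_i+(A_i+B_i)=0 .
\end{equation}
Because $A_i+B_i=(X_i+Y_i+1)(W_i+1)$ and $B_i+Z_i=Z_i+W_i+1$, I would then split into two cases governed precisely by whether $A_i=Z_i$, i.e.\ whether $(X_i+Y_i)(W_i+1)=Z_i$. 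When $A_i=Z_i$ the leading coefficient of (\ref{eq:br_plan_quadratic}) vanishes and the equation is linear, yielding $\rho_i=\tfrac12$, which is exactly the first branch of (\ref{eq:BR_function_AF}). When $A_i\ne Z_i$, applying the quadratic formula and writing the discriminant as $(A_i+B_i)(B_i+Z_i)$, a difference-of-squares factorization of $A_i-Z_i=(A_i+B_i)-(B_i+Z_i)$ collapses the two roots to $\sqrt{P_i}/(\sqrt{P_i}\pm\sqrt{Q_i})$ with $P_i=(X_i+Y_i+1)(W_i+1)$ and $Q_i=Z_i+W_i+1$; only the minus-sign root lies in $(0,1)$, giving the second branch of (\ref{eq:BR_function_AF}).

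Finally I would certify that this interior stationary point is the global maximizer on $[0,1]$. Since the numerator of $\gamma_i^{AF}$ carries both factors $\rho_i$ and $1-\rho_i$ while the denominator is strictly positive, one has $\gamma_i^{AF}=0$ at both endpoints and $\gamma_i^{AF}>0$ in the interior. Combined with the quasi-concavity of $u_i^{AF}$ in $\rho_i$ already established in Proposition \ref{prop:NE_existence_AF_game} (equivalently, the fact that (\ref{eq:br_plan_quadratic}) has a single root in $(0,1)$), the unique interior critical point must be the maximizer, so the best-response map $\mathcal B_i^{AF}$ is well defined and equals (\ref{eq:BR_function_AF}). Note also that the two branches agree continuously: as $A_i\to Z_i$ one has $P_i\to Q_i$ and the second expression tends to $\tfrac12$, so the piecewise statement is merely an artifact of the quadratic degenerating to a linear equation.

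The main obstacle is purely the algebra of Step two: verifying that the cubic in $N_i'D_i-N_iD_i'$ genuinely loses its top-degree term to become the clean quadratic (\ref{eq:br_plan_quadratic}), and then recognizing the discriminant and leading coefficient as the paired products $(A_i+B_i)(B_i+Z_i)$ and $(A_i+B_i)-(B_i+Z_i)$ so that the awkward rational roots factor into the symmetric $\sqrt{P_i}/(\sqrt{P_i}+\sqrt{Q_i})$ form. Everything else — the monotonic reduction, the endpoint vanishing, and the appeal to quasi-concavity for uniqueness of the maximizer — is routine.
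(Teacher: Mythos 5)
Your proposal is correct and follows essentially the same route as the paper: the stationarity condition you derive, $(A_i-Z_i)\rho_i^2-2(A_i+B_i)\rho_i+(A_i+B_i)=0$, is exactly the paper's quadratic $\kappa_i(\rho_i)=C_i\rho_i^2-2D_i\rho_i+D_i=0$ with $C_i=(X_i+Y_i)(W_i+1)-Z_i$ and $D_i=(X_i+Y_i+1)(W_i+1)$, obtained there from $\partial u_i^{AF}/\partial\rho_i$ in the proof of Proposition~\ref{prop:NE_existence_AF_game}, and the paper likewise splits on $C_i=0$ versus $C_i\ne 0$ and rationalizes the quadratic-formula root to $\sqrt{D_i}/(\sqrt{D_i}+\sqrt{D_i-C_i})$. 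The only cosmetic difference is that you differentiate $\gamma_i^{AF}$ directly rather than the log-utility, which changes nothing since the two derivatives differ by a positive factor.
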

\begin{proof}
See Appendix \ref{append:lemma_BR_AF}.
\end{proof}

Now let us verify the correctness of the best response function (\ref{eq:BR_function_AF}) by utilizing the special case that $\left( {{X_i} + {Y_i}} \right)\left( {{W_i} + 1} \right) = {Z_i}$. If we insert this condition into the expression of SINR (\ref{eq:SINR_AF}), we can readily obtain that the SINR is maximized when the term $\rho_i\left(1-\rho_i\right)$ is maximized. This implies that the best response $\rho_i^* = 1/2$, which is consistent with our previous analysis of the best response function. To gain more insights, let us rewrite $\left( {{X_i} + {Y_i}} \right)\left( {{W_i} + 1} \right) = {Z_i}$ as $ {{X_i} + {Y_i}}  = {Z_i}/\left( {{W_i} + 1} \right)$. Then we can note that the left-hand side of the previous equation represents the signal-plus-interference-to-noise ratio of the first hop when the relay only perform information forwarding (i.e., $\rho_i = 0$), while the right-hand side denotes the SINR of the second hop with only energy harvesting at the relay (i.e., $\rho_i = 1$). This special case reveals that the relay $R_i$ should equally split its received signal, when the signal-plus-interference-to-noise ratio of the first hop for $\rho_i = 0$ equals to the SINR of the second hop for $\rho_i = 1$.

We now define the vector function ${\pmb{\mathcal B}}^{AF}\left( \pmb \rho  \right) = \left[ {{{\mathcal B}_1^{AF}}\left( \pmb \rho  \right), \ldots ,{{\mathcal B}^{AF}_N}\left( \pmb \rho  \right)} \right]^T$. Then, according to the well-known fixed point theorem \cite{Lasaulce_SPM_2009}, the strategy profile ${\pmb \rho} ^*$ is a NE of the formulated game ${\mathcal{G}}_{AF}$ if and only if it is the fixed point of the function ${\pmb{\mathcal B}}^{AF}\left( \pmb \rho  \right)$ (i.e, ${\pmb{\mathcal B}}^{AF}\left( {\pmb \rho} ^*  \right) = {{\pmb \rho} ^* } $). Hence, the uniqueness for the NE of the formulated game is equivalent to that for the fixed point of the function ${\pmb{\mathcal B}}^{AF}\left( \pmb \rho  \right)$. Furthermore, it is shown in \cite{Yates_JSAC_1995} that the fixed point of the function ${\pmb{\mathcal B}}^{AF}\left( \pmb \rho  \right)$ is unique if ${\pmb{\mathcal B}}^{AF}\left( \pmb \rho  \right)$ is a \emph{standard} function. The standard function is defined as follows:
\begin{definition}\label{eq:standard_func}
A function ${\bf{ f}}\left( \bf x  \right)$ is said to be standard if it satisfies the following properties for all $\bf x \ge0$:
\begin{itemize}
  \item \emph{Positivity}: ${\bf{f}}\left( \bf x \right) > \bf{0}$.
  \item \emph{Monotonicity}: If ${\bf x} \ge {{\bf x} ^\prime}$, then ${\bf{f}}\left( \bf x  \right) \ge {\bf{f}}\left( {\bf x}^\prime  \right)$.
  \item \emph{Scalability}: For all $\alpha >1 $, $\alpha {\bf{f}}\left( \bf x  \right)> {\bf{f}}\left( \alpha {\bf x}  \right)$.
\end{itemize}
Here, all the inequalities are componentwise.
\end{definition}

After investigating the properties of the best response functions given in Lemma \ref{lemma:BR_AF}, we have the following proposition regarding the uniqueness of the NE:
\begin{proposition}\label{prop:NE_unique_AF}
The formulated game $\mathcal G_{AF}$ for the AF network always admits a unique NE.
\end{proposition}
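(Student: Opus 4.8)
The plan is to invoke the standard-function machinery set up just above the statement. Since Proposition~\ref{prop:NE_existence_AF_game} already guarantees that a NE exists, and since a NE coincides with a fixed point of $\pmb{\mathcal B}^{AF}$, it suffices to show that $\pmb{\mathcal B}^{AF}$ is a \emph{standard} function in the sense of Definition~\ref{eq:standard_func}; by \cite{Yates_JSAC_1995} a standard function has at most one fixed point, which forces the NE to be unique. The crucial structural observation that makes this tractable is that the $i$th component $\mathcal B_i^{AF}(\pmb\rho)$ in Lemma~\ref{lemma:BR_AF} depends on $\pmb\rho$ \emph{only} through $W_i$, and that $W_i$ is a nonnegative, linear and homogeneous form in the remaining ratios $\pmb\rho_{-i}$ (see (\ref{eq:def_XYZW_d})), whereas $X_i$, $Y_i$, $Z_i$ are constants. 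Moreover, substituting $(X_i+Y_i)(W_i+1)=Z_i$ into the second branch of (\ref{eq:BR_function_AF}) returns exactly $1/2$, so the two branches coincide at the boundary and $\mathcal B_i^{AF}$ can be treated as a single smooth map $g_i(W_i)$. Hence all three vector properties reduce to scalar properties of $g_i$.

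Positivity is immediate, since $g_i(W_i)$ is a ratio of strictly positive quantities (and equals $1/2$ in the degenerate case). For monotonicity, note that if $\pmb\rho\ge\pmb\rho'$ componentwise then $W_i(\pmb\rho)\ge W_i(\pmb\rho')$ because the coefficients in (\ref{eq:def_XYZW_d}) are nonnegative; it then remains to check that $g_i$ is increasing in $W_i$. I would establish this by differentiating $g_i$, which after simplification yields a derivative proportional to $Z_i>0$, confirming monotonicity with essentially no case analysis.

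Scalability is the step I expect to be the main obstacle. Writing $A_i := X_i+Y_i+1$, homogeneity of $W_i$ gives $W_i(\alpha\pmb\rho)=\alpha W_i(\pmb\rho)$, so the vector condition $\alpha\,\pmb{\mathcal B}^{AF}(\pmb\rho)>\pmb{\mathcal B}^{AF}(\alpha\pmb\rho)$ collapses, component by component, to the scalar statement $\alpha\,g_i(w)>g_i(\alpha w)$ for every $\alpha>1$ and $w=W_i(\pmb\rho)\ge0$; equivalently, $g_i(x)/x$ must be strictly decreasing on $x>0$. I would verify this by examining the sign of $\frac{d}{dx}\left(g_i(x)/x\right)$, which reduces to a polynomial-type inequality in $a:=\sqrt{w+1}$ and $b:=\sqrt{Z_i+w+1}$. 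The delicate point is that this inequality is \emph{not} sign-definite on its own: closing it requires exploiting $A_i\ge1$ (which holds because $X_i,Y_i\ge0$) together with $b>a\ge1$ (which holds because $Z_i>0$). Overlooking the bound $A_i\ge1$ is precisely where the argument would fail, so I would flag it explicitly. The boundary case $w=0$ is handled separately and trivially, since then $g_i(\alpha\cdot 0)=g_i(0)$ and $\alpha>1$ with $g_i(0)>0$ give the strict inequality at once.

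Combining the three verified properties, $\pmb{\mathcal B}^{AF}$ is standard; together with the already-established existence of a NE, this yields the uniqueness claimed in Proposition~\ref{prop:NE_unique_AF}.
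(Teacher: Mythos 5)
Your proposal is correct and takes essentially the same route as the paper: show that the best-response map $\pmb{\mathcal B}^{AF}$ is a standard function (positivity, monotonicity, scalability, each reduced to a scalar property in $W_i$, since $W_i$ is the only quantity depending on $\pmb\rho$ and is nonnegative, linear and homogeneous in $\pmb\rho_{-i}$) and then invoke the fixed-point uniqueness result of \cite{Yates_JSAC_1995}. One correction to your scalability step: the inequality $\alpha\,{\mathcal B}_i^{AF}(\pmb\rho)>{\mathcal B}_i^{AF}(\alpha\pmb\rho)$ is in fact sign-definite from positivity alone --- writing ${\mathcal B}_i^{AF}=\bigl(1+\sqrt{\tfrac{Z_i}{(X_i+Y_i+1)(W_i+1)}+\tfrac{1}{X_i+Y_i+1}}\bigr)^{-1}$ and pulling $\alpha$ inside the denominator as the paper does, the claim follows termwise from $1/\alpha<1$, $\alpha^2(W_i+1)>\alpha W_i+1$, and $\alpha^2>1$, so the bound $X_i+Y_i+1\ge 1$ that you flag as essential is not actually needed (though it does hold, so your more roundabout version would also close).
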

\begin{proof}
See Appendix \ref{append:Prop_NE_unique_AF}.
\end{proof}

\subsection{Uniqueness for the NE of the game ${\mathcal{G}}_{DF}$}

To validate the uniqueness for the NE of the game ${\mathcal{G}}_{DF}$, most of the methodologies also fail to apply except the approach of standard function \cite{Yates_JSAC_1995} due to the non-differentiability of the utility functions (the $\min$ operator). Similarly, we first derive the best response functions of the links (players) in the game ${\mathcal G}_{DF}$ and have the following lemma:
\begin{lemma}\label{lemma:BR}
Given a power splitting strategy profile $\pmb \rho$, the best response function of the link ${S_i} \to {R_i} \to {D_i}$ in the game ${\mathcal G}_{DF}$ can be expressed as
\begin{equation}\label{eq:BR_function_DF}
\begin{split}
{{\mathcal B}_i^{DF}}\left(\pmb \rho  \right)&= \left[ {\left( {{X_i}{W_i} + {X_i} + {Y_i}{Z_i} + {Z_i}} \right) - } \right. \\
&\left. {\sqrt {{{\left( {{X_i}{W_i} + {X_i} - {Y_i}{Z_i} + {Z_i}} \right)}^2} + 4{Y_i}Z_i^2} } \right]/\left( {2{Y_i}{Z_i}} \right).
\end{split}
\end{equation}
\end{lemma}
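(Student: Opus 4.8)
The plan is to exploit the fact that, from the viewpoint of player $i$, the quantities $X_i$, $Y_i$, $Z_i$ and $W_i$ defined in (\ref{eq:def_XYZW}) are all \emph{constants}: by (\ref{eq:def_XYZW_a})--(\ref{eq:def_XYZW_d}) none of them depends on $\rho_i$ (in particular $W_i$ involves only the ratios $\rho_j$ with $j\neq i$). Hence, with $\boldsymbol{\rho}_{-i}$ held fixed, the best response is obtained by maximizing $u_i^{DF}(\rho_i,\boldsymbol{\rho}_{-i})$ in (\ref{eq:utility_DF}) over $\rho_i\in[0,1]$, and since $\tfrac12\log(1+\cdot)$ is strictly increasing this is equivalent to maximizing the end-to-end SINR $\gamma_i^{DF}=\min(\gamma_{i,1}^{DF},\gamma_{i,2}^{DF})$ given in (\ref{eq:SINR_e2e_DF}). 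First I would establish the monotonicity of the two branches in $\rho_i$. Writing $\gamma_{i,1}^{DF}$ from (\ref{eq:DF_SNR_1st_hop}) with the substitution $t=1-\rho_i$ as $tX_i/(tY_i+1)$ shows it is strictly increasing in $t$, hence strictly \emph{decreasing} in $\rho_i$, falling from $X_i/(Y_i+1)$ at $\rho_i=0$ to $0$ at $\rho_i=1$; by contrast $\gamma_{i,2}^{DF}=\rho_iZ_i/(W_i+1)$ in (\ref{eq:DF_SNR_2rd_hop}) is linear and strictly \emph{increasing}, rising from $0$ to $Z_i/(W_i+1)$.

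The key structural observation is that the pointwise minimum of a strictly decreasing and a strictly increasing function is unimodal: it coincides with the increasing branch $\gamma_{i,2}^{DF}$ before the two curves intersect and with the decreasing branch $\gamma_{i,1}^{DF}$ afterwards, so $\gamma_i^{DF}$ rises and then falls and is maximized exactly at the unique crossing point $\rho_i^\star$ where $\gamma_{i,1}^{DF}(\rho_i^\star)=\gamma_{i,2}^{DF}(\rho_i^\star)$ (both branches vanish at one endpoint, confirming the optimum is interior). Thus the best response reduces to solving $\gamma_{i,1}^{DF}=\gamma_{i,2}^{DF}$. Clearing the positive denominators $(1-\rho_i)Y_i+1$ and $W_i+1$ turns this into the quadratic
$$Y_iZ_i\,\rho_i^2-(X_iW_i+X_i+Y_iZ_i+Z_i)\,\rho_i+X_i(W_i+1)=0,$$
whose discriminant, after regrouping $X_iW_i+X_i+Z_i$ against $Y_iZ_i$, collapses to $(X_iW_i+X_i-Y_iZ_i+Z_i)^2+4Y_iZ_i^2$; the quadratic formula then yields precisely the two candidate expressions in (\ref{eq:BR_function_DF}).

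The remaining and most delicate step is \emph{root selection}: I must argue that the ``$-$'' root is the admissible one and that it automatically lies in $[0,1]$, so that no boundary projection is needed (unlike the AF best response in Lemma \ref{lemma:BR_AF}). The clean way to do this is to evaluate the quadratic $q(\rho_i)=Y_iZ_i\rho_i^2-(\dots)\rho_i+X_i(W_i+1)$ at the endpoints: $q(0)=X_i(W_i+1)>0$ while $q(1)=-Z_i<0$. Since the leading coefficient $Y_iZ_i$ is positive, the upward-opening parabola has exactly one root in $(0,1)$ --- necessarily the \emph{smaller} one --- while the larger root exceeds $1$. The smaller root is the one carrying the minus sign in front of the square root, which is exactly (\ref{eq:BR_function_DF}). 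I expect the two genuine obstacles to be the algebraic simplification of the discriminant into the stated compact form and this sign/endpoint argument pinning down the correct root; the monotonicity and unimodality parts are routine once $X_i$, $Y_i$, $Z_i$, $W_i$ are recognized as constants.
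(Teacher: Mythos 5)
Your proposal is correct and follows essentially the same route as the paper's own proof: exploit the opposing monotonicity of $\gamma_{i,1}^{DF}$ and $\gamma_{i,2}^{DF}$ in $\rho_i$ to reduce the best response to the equation $\gamma_{i,1}^{DF}=\gamma_{i,2}^{DF}$, clear denominators to get the quadratic $Y_iZ_i\rho_i^2-[X_i(W_i+1)+Y_iZ_i+Z_i]\rho_i+X_i(W_i+1)=0$, and use the endpoint signs $q(0)>0$, $q(1)=-Z_i<0$ together with the positive leading coefficient to select the smaller root as the unique one in $(0,1)$. Your additions (the explicit unimodality argument for the pointwise minimum and the verification that the discriminant collapses to $(X_iW_i+X_i-Y_iZ_i+Z_i)^2+4Y_iZ_i^2$) only flesh out steps the paper leaves implicit.
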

\begin{proof}
See Appendix \ref{append:lemma_BR_function_DF}.
\end{proof}

We now define the vector function ${\pmb{\mathcal B}}^{DF}\left( \pmb \rho  \right) = \left[ {{{\mathcal B}_1^{DF}}\left( \pmb \rho  \right), \ldots ,{{\mathcal B}_N^{DF}}\left( \pmb \rho  \right)} \right]^T$. After investigating the properties of the function ${\pmb{\mathcal B}}^{DF}\left( \pmb \rho  \right)$, we have the following proposition regarding the uniqueness of the NE:
\begin{proposition}\label{prop:NE_unique}
The game ${\mathcal G}_{DF}$ also always possesses a unique NE.
\end{proposition}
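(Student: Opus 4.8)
The plan is to prove uniqueness through the same standard-function machinery \cite{Yates_JSAC_1995} already used for $\mathcal{G}_{AF}$ in Proposition~\ref{prop:NE_unique_AF}: I will show that the vector best-response map $\pmb{\mathcal B}^{DF}(\pmb\rho)$ is a \emph{standard} function in the sense of Definition~\ref{eq:standard_func}, so that its fixed point---and hence the NE of $\mathcal{G}_{DF}$---is unique. The structural observation that makes this tractable is that, by \eqref{eq:def_XYZW_d}, the component ${\mathcal B}_i^{DF}(\pmb\rho)$ in \eqref{eq:BR_function_DF} depends on $\pmb\rho_{-i}$ \emph{only} through the single aggregate quantity $W_i$, which is a nonnegative homogeneous linear form in $\pmb\rho_{-i}$, strictly increasing in each $\rho_j$ ($j\ne i$) and satisfying $W_i(\alpha\pmb\rho)=\alpha W_i(\pmb\rho)$. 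This lets me reduce each of the three required properties to an elementary property of the scalar function
\[
 f_i(w)=\frac{(X_i w + X_i + Y_i Z_i + Z_i)-\sqrt{(X_i w + X_i - Y_i Z_i + Z_i)^2+4Y_i Z_i^2}}{2Y_i Z_i},\quad w\ge 0,
\]
obtained by setting $W_i=w$ in \eqref{eq:BR_function_DF}, since ${\mathcal B}_i^{DF}(\pmb\rho)=f_i\!\left(W_i(\pmb\rho)\right)$.

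For \emph{positivity}, I would square the requirement that the numerator of $f_i(w)$ be positive; after cancellation this collapses to $X_i w + X_i>0$, which holds because $X_i>0$. (The same manipulation, comparing the numerator with $2Y_iZ_i$, gives $f_i(w)<1$, so the best response automatically lies in $(0,1)$ and no explicit projection onto $\mathcal A_i$ is needed.) For \emph{monotonicity}, since $\pmb\rho\ge\pmb\rho'$ implies $W_i(\pmb\rho)\ge W_i(\pmb\rho')$, it suffices that $f_i$ be increasing; a direct differentiation gives
\[
 f_i'(w)=\frac{X_i}{2Y_iZ_i}\left[1-\frac{X_iw+X_i+Z_i-Y_iZ_i}{\sqrt{(X_iw+X_i+Z_i-Y_iZ_i)^2+4Y_iZ_i^2}}\right]>0,
\]
the bracket being positive because the radicand strictly exceeds the square of its leading term.

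For \emph{scalability}, the homogeneity $W_i(\alpha\pmb\rho)=\alpha W_i(\pmb\rho)$ turns the required inequality $\alpha{\mathcal B}_i^{DF}(\pmb\rho)>{\mathcal B}_i^{DF}(\alpha\pmb\rho)$ into $\alpha f_i(w)>f_i(\alpha w)$ for all $\alpha>1$ and $w=W_i(\pmb\rho)\ge0$. I would establish this by showing $f_i$ is concave with a strictly positive intercept: a second differentiation yields the clean form $f_i''(w)=-\,2X_i^2 Z_i\big/\left(u^2+4Y_iZ_i^2\right)^{3/2}<0$ with $u=X_iw+X_i+Z_i-Y_iZ_i$, while $f_i(0)>0$ is the $w=0$ instance of positivity. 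Then the decomposition $w=\tfrac1\alpha(\alpha w)+(1-\tfrac1\alpha)\cdot 0$, combined with concavity and $f_i(0)>0$, gives $f_i(w)>\tfrac1\alpha f_i(\alpha w)$, i.e. the strict scalability inequality (the case $w=0$ being immediate). With positivity, monotonicity and scalability all verified componentwise, $\pmb{\mathcal B}^{DF}$ is standard, and \cite{Yates_JSAC_1995} then guarantees a unique fixed point, which is the unique NE.

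I expect the scalability step to be the main obstacle: positivity and monotonicity are essentially immediate once the reduction to the scalar $f_i(w)$ is made, whereas scalability relies on the two additional structural facts that $W_i$ is homogeneous of degree one and that $f_i$ is concave with a strictly positive intercept. Verifying concavity through the sign of $f_i''$ is the only genuinely computational part, and the fact that it reduces to the manifestly negative expression above is precisely what makes the standard-function argument close.
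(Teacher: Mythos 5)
Your proposal is correct, and it lives inside the same standard-function framework as the paper's own proof: you verify positivity, monotonicity and scalability of $\pmb{\mathcal B}^{DF}$ componentwise and then invoke \cite{Yates_JSAC_1995}. The positivity and monotonicity steps essentially coincide with the paper's (the paper inherits positivity from the root-location analysis in Appendix~\ref{append:lemma_BR_function_DF}, and the derivative it computes for monotonicity is exactly your $f_i'(w)$), so the only genuine divergence is in the scalability step. The paper introduces ${\mathcal F}_i(\alpha,\pmb\rho)=\alpha{\mathcal B}_i^{DF}(\pmb\rho)-{\mathcal B}_i^{DF}(\alpha\pmb\rho)$, notes ${\mathcal F}_i(1,\pmb\rho)=0$, shows $\partial^2{\mathcal F}_i/\partial\alpha^2>0$ so that $\partial{\mathcal F}_i/\partial\alpha$ is increasing in $\alpha$, and then proves $\partial{\mathcal F}_i/\partial\alpha\vert_{\alpha=1}>0$ via a further monotonicity-in-$W_i$ argument evaluated at $W_i=0$. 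You instead compute $f_i''(w)=-2X_i^2Z_i\big/\bigl(u^2+4Y_iZ_i^2\bigr)^{3/2}<0$, check $f_i(0)>0$, and deduce $\alpha f_i(w)>f_i(\alpha w)$ from the chord inequality applied to $w=\tfrac1\alpha(\alpha w)+\bigl(1-\tfrac1\alpha\bigr)\cdot 0$. The two arguments are close relatives---the paper's $\partial^2{\mathcal F}_i/\partial\alpha^2$ is precisely $-W_i^2 f_i''(\alpha W_i)$, so it is a concavity computation in disguise---but your packaging is cleaner: it replaces the two-stage integration of derivatives in $\alpha$ and the auxiliary evaluation at $W_i=0$ with the single general fact that a concave function with a strictly positive value at the origin is strictly sub-homogeneous of degree one. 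Your explicit reduction to the scalar map $f_i(W_i)$, with $W_i$ a nonnegative homogeneous linear form in $\pmb\rho_{-i}$ by (\ref{eq:def_XYZW_d}), also makes all three properties more transparent, and the observation $f_i(w)<1$ is a small bonus the paper does not state. Both proofs are complete; yours is shorter and explains \emph{why} scalability holds, while the paper's is a more self-contained calculus verification.
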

\begin{proof}
See Appendix \ref{append:Prop_2}.
\end{proof}

\subsection{Distributed Algorithm}\label{sec:algorithm_AF}
So far, we have proved that the formulated games ${\mathcal {G}}_{AF}$ and ${\mathcal {G}}_{DF}$ for pure networks always have a unique NE for any channel conditions and network topologies. However, this equilibrium is meaningful in practice only if one can develop an algorithm that is able to achieve such an equilibrium from non-equilibrium states. In this subsection, we propose a distributed and iterative algorithm with provable convergence to achieve the NE, in which the links update their power splitting ratios simultaneously. In addition, we discuss the practical implementation of the proposed algorithm.

\subsubsection{Algorithm Description}
Various kinds of distributed algorithms have been proposed to find the NEs (see \cite{Lasaulce_book_2011} for more information). Here, we are interested in best response-based algorithms since we have obtained the best response functions of the formulated games. Relying on the derived best response functions in  (\ref{eq:BR_function_AF}) and (\ref{eq:BR_function_DF}), we develop a distributed power splitting algorithm for the pure networks, which is formally described in {\bf{Algorithm 1}}. In terms of the convergence of Algorithm 1, we have the following result:
\begin{proposition}\label{prop:convergence_alg_AF}
From any initial point, Algorithm 1 always converges to the unique NE of the formulated games $\mathcal G_{AF}$ and $\mathcal G_{DF}$.
\end{proposition}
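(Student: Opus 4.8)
The plan is to recognize that Algorithm~1 is precisely the synchronous best-response iteration $\pmb\rho(t+1) = {\pmb{\mathcal B}}^{X}\left(\pmb\rho(t)\right)$ with $X\in\{AF,DF\}$, and then to invoke the convergence guarantee that accompanies the standard-function framework of \cite{Yates_JSAC_1995}. That result states that if a mapping is standard in the sense of Definition~\ref{eq:standard_func} and possesses a fixed point, then the synchronous iteration of the mapping converges to the (necessarily unique) fixed point from \emph{any} feasible starting point. Hence the substance of the argument reduces to certifying that the two hypotheses --- standardness and the existence of a fixed point --- hold for the vector best-response maps ${\pmb{\mathcal B}}^{AF}$ and ${\pmb{\mathcal B}}^{DF}$.

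Both hypotheses are in fact already in hand from the earlier analysis, so I would assemble them rather than reprove them. First I would observe that a fixed point of ${\pmb{\mathcal B}}^{X}$ is, by the fixed-point characterization stated before Definition~\ref{eq:standard_func}, exactly an NE of $\mathcal G_X$, whose existence is guaranteed by Proposition~\ref{prop:NE_existence_AF_game}. Second, the proofs of Proposition~\ref{prop:NE_unique_AF} and Proposition~\ref{prop:NE_unique} establish uniqueness precisely by showing that ${\pmb{\mathcal B}}^{AF}$ and ${\pmb{\mathcal B}}^{DF}$ satisfy the positivity, monotonicity, and scalability conditions of Definition~\ref{eq:standard_func}; that is, each map has already been verified to be standard. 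I would restate this explicitly, pointing back to the component expressions \eqref{eq:BR_function_AF} and \eqref{eq:BR_function_DF}, so that the reader sees the same functions being iterated in the algorithm.

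With standardness and a fixed point established, the convergence theorem of \cite{Yates_JSAC_1995} applies verbatim: starting from any $\pmb\rho(0)\in\mathcal A$, the iterates $\pmb\rho(t)={\pmb{\mathcal B}}^{X}\!\left(\pmb\rho(t-1)\right)$ converge to the unique fixed point, which is the unique NE guaranteed by Propositions~\ref{prop:NE_unique_AF} and \ref{prop:NE_unique}. I would close by noting that a single argument covers both games, since the only game-specific ingredient --- standardness of the relevant best-response map --- has been supplied separately for the AF and DF cases.

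The step I expect to be the main, albeit modest, obstacle is the bookkeeping needed to confirm that the iteration stays well defined and suitably ordered inside the compact feasible set $\mathcal A = [0,1]^N$, since Yates' framework is phrased for vectors in the nonnegative orthant. Concretely, I must check that each best-response output always lies in $[0,1]$ --- for AF the map \eqref{eq:BR_function_AF} never exceeds $\tfrac{1}{2}$, and for DF the expression in \eqref{eq:BR_function_DF} can be shown to lie in $(0,1)$ --- so that after the first update the iterates remain inside $\mathcal A$ and the standardness and monotonicity properties are never invoked outside the domain where they were proved. Once this containment is verified, the remainder is a direct citation of the standard-function convergence result.
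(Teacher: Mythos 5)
Your proposal is correct and follows essentially the same route as the paper: Algorithm~1 is the synchronous iteration of the best-response maps ${\pmb{\mathcal B}}^{AF}$ and ${\pmb{\mathcal B}}^{DF}$, which were already shown to be standard in the uniqueness proofs, so convergence to the unique fixed point (the NE) follows directly from \cite[Thm.~2]{Yates_JSAC_1995}, which is exactly the paper's one-line argument. One minor slip in your domain-containment remark: the AF best response is \emph{not} bounded above by $\tfrac{1}{2}$ --- when $C_i>0$ it lies in $\left(\tfrac{1}{2},1\right)$ per Appendix~B --- but it always lies in $(0,1)$, so the containment you actually need still holds and the argument is unaffected.
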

\begin{proof}
Since the best response vector functions ${\pmb{\mathcal B}}^{AF}\left( \pmb \rho  \right)$ and ${\pmb{\mathcal B}}^{DF}\left( \pmb \rho  \right)$ are both standard (proved in Appendix \ref{append:Prop_NE_unique_AF} and \ref{append:Prop_2}), the proof of this proposition follows with reference to \cite[Thm. 2]{Yates_JSAC_1995}.
\end{proof}

\begin{table}
\begin{center}
\begin{algorithm}[H]
%\algsetup{linenosize=\normalsize}
%\normalsize
%\textbf{Input:}$\quad$ Positive integer $\upsilon$.\\
%\textbf{Output:}$\quad$ Binary matrices $\boldsymbol G_{2,q}$, $q=1,2,\cdots,2^{\upsilon-1}$.\\
%\textbf{Steps:}
\begin{algorithmic}[1]
\small
\STATE Set ${t} = 0$ and each player (link) $i\in{\mathcal N}$ chooses a random power splitting ratio ${\rho}_i{\left(0\right)}$ from the feasible set ${\mathcal A}_i$.
%\STATE {\textbf {for}} ${t} = 0:{\rm{Number\_of\_Iteration}}-1$
\STATE If a suitable termination criterion is satisfied: $\rm{STOP}$.
\STATE Each link $i \in {\mathcal N}$ updates the power splitting ratio via executing
\begin{equation}\label{eq:BR_algorith}
%\footnotesize
\begin{split}
{\rho _i}(t + 1) = \left\{ \begin{array}{l}
 {\cal B}_i^{AF}\left( {\pmb \rho (t)} \right)\;{\rm{for\;AF\;network}} \\
 {\cal B}_i^{DF}\left( {\pmb \rho (t)} \right)\;{\rm{for\;DF\;network}} \\
 \end{array} \right..
\end{split}
\end{equation}
%\STATE
%\begin{adjustwidth}{0.5cm}{0cm}
%The energy provider determines the value of $\gamma(t+1)$, and broadcasts that value as well as ${\bf{q}}_{\Sigma}{\left( {{t+1}} \right)} = \sum\nolimits_{n = 1}^N {{\bf{q}}_n{\left( {{t+1}} \right)}} $ to the consumers.
%\end{adjustwidth}
\STATE
%\begin{adjustwidth}{0.5cm}{0cm}
$t \leftarrow t + 1$; go to $\rm{STEP}$ 2.
%\end{adjustwidth}
%\STATE {\textbf {Until}} the termination criterion is satisfied.
\end{algorithmic}
\caption{: Distributed Power Splitting Algorithm for the Pure Networks
}
\end{algorithm}
\end{center}
%\caption{Description of \textbf{Algorithm $1$} that is used to generate a series of full column rank binary matrices.}\label{fig:alg1}
\end{table}

\subsubsection{Implementation Discussion}
Note that the distributed nature of the above algorithm is based on modeling each link as a single player. However, each link consists of three nodes (i.e., source, relay, and destination), which are geographically separated in practical networks. Thus, an efficient implementation of the proposed algorithm with the minimum information sharing should be designed.

In our design, the relay node is expected to be the link coordinator that undertakes the information collection and the best-response computation. Here, we assume that the energy consumed for the algorithm computations at the relay nodes are negligible compared with the harvested energy used for information forwarding since these computations are quite simple. This assumption can be further supported by the rapid development of the low-power chips. Next, we aim to identify the information that is needed to collect or exchange for the implementation of the proposed algorithm. According to the best response functions given in (\ref{eq:BR_function_AF}), the relay $R_i$ needs to know the values of the parameters $X_i$, $Y_i$, $Z_i$, and $W_i$ defined in (\ref{eq:def_XYZW}). To this end, the relay $R_i$ should perform the following tasks:
\begin{itemize}
  \item {\emph{Task 1}}: Measure the channel gains $g_{ii}$ and $h_{ii}$, acquire the transmission power $P_i$ from its source, and then calculate the value of $X_i$ based on these information.
  \item {\emph{Task 2}}: Measure\footnote{The measurement of the signal power can be performed by the radio scene analyzer \cite{Shi_TWC_2009}.} the power of its received signal, acquire the power of the received signal at the destination $D_i$, and then calculate the values of $Y_i$, $Z_i$, $W_i$.
  \item {\emph{Task 3}}: Calculate the optimal power splitting ratio $\rho_i$ based on (\ref{eq:BR_function_AF}).
\end{itemize}

From the above description, we can observe that the overheads are required in transmitting the following three kinds of information for each link in the proposed algorithm: (1) pilots for estimating the channel state information (CSI) from source to its dedicated relay and CSI from relay to destination, (2) the value of transmit power from the source to relay, and (3) the power of received signal at the destination, which needs to be sent from the destination to its dedicated relay. The first two kinds of overheads are only needed once for each channel realization, while the third one is required in each iteration of the proposed algorithm. From the above discussion, we can see that in the proposed algorithm, only some local information needs to be exchanged within each link and no information needs to be exchanged among different links.

Finally, note that a possible termination criterion for the proposed Algorithm 1 can be
 \[\left[ {{\rho _i}\left( {t + 1} \right) - {\rho _i}\left( t \right)} \right]/{\rho _i}\left( {t + 1} \right) \le \zeta ,\]
 where $\zeta$ is a sufficiently small constant.

\section{Extension to Hybrid Network}
In this section, we generalize the proposed game-theoretical power splitting scheme to a hybrid network containing both AF and DF relays. The set of the links with AF relays and DF relays are denoted by ${\mathcal N}_{AF}$ and ${\mathcal N}_{DF}$, respectively. Before formally describing the non-cooperative game for this case, it is important to notice that for a given power splitting ratio profile $\pmb\rho_{-i}$, the amount of interference received by the destination $D_i$ is fixed. This is because for a given realization of channels, the interference at each destination is only determined by the transmit powers of relays in other links and is independent of the relaying protocols (i.e., AF or DF) adopted by these relays. Thus, the achievable rate of the $i$th link in the hybrid network can still be expressed by (\ref{eq: utiliy_AF}), when the AF relaying protocol is adopted at the relay $R_i$, and by (\ref{eq:utility_DF}), when the DF relaying protocol is adopted at the relay $R_i$.

Now, we can formulate the following non-cooperative game for the considered hybrid network:
\begin{itemize}
\item \emph{{Players}}: The $N$ $S$-$R$-$D$ links.
\item \emph{{Actions}}: Each link determines its power splitting ratio $\rho_i \in {\mathcal{A}}_i$ to maximize the achievable rate for its own link.
\item \emph{{Utilities}}: The achievable rate
\end{itemize}
\begin{equation}\label{}
    u_i^{HD}\left( {{\rho _i},{\pmb \rho _{ - i}}} \right) = \left\{ \begin{array}{l}
    u_i^{AF}\left( {{\rho _i},{\pmb \rho _{ - i}}} \right)~{\rm{in}}~(\ref{eq: utiliy_AF}),\;\;\;\;{\rm{if}}\;i \in {{\mathcal N}_{AF}} \\
    u_i^{DF}\left( {{\rho _i},{\pmb \rho _{ - i}}} \right)~{\rm{in}}~(\ref{eq:utility_DF}),\;\;{\rm{if}}\;i \in {{\mathcal N}_{DF}} \\
    \end{array} \right..
\end{equation}

For convenience, we denote the above non-cooperative game as
\begin{equation}\label{eq:hybrid_game}
{\mathcal{G}}_{HD} = \left\langle {\mathcal{N},\left\{ \mathcal{A}_i  \right\},\left\{{{u}_i^{HD}}\left( {{\rho_i},{{{\pmb \rho}}_{ -i}}} \right)\right\} }\right\rangle.
\end{equation}

Subsequently, according to Proposition \ref{prop:NE_existence_AF_game}, Lemma \ref{lemma:BR_AF}, Proposition \ref{prop:NE_unique_AF}, Lemma \ref{lemma:BR}, and Proposition \ref{prop:NE_unique}, we have the following corollary regarding the best response function and the existence and uniqueness of the NE for the game ${\mathcal G}_{HD}$:
\begin{corollary}
The best response function of the $i$th link in the game ${\mathcal G}_{HD}$ can be expressed as
\begin{equation}\label{eq:BR_hybrid}
{{\mathcal B}_i^{HD}}\left(\pmb \rho  \right) = \left\{ \begin{array}{l}
{{\mathcal B}_i^{AF}}\left(\pmb \rho  \right)~{\rm{in}}~(\ref{eq:BR_function_AF}),\;\;{\rm{if}}\;i \in {{\mathcal N}_{AF}} \\
{{\mathcal B}_i^{DF}}\left(\pmb \rho  \right)~{\rm{in}}~(\ref{eq:BR_function_DF}),\;\;{\rm{if}}\;i \in {{\mathcal N}_{DF}} \\
\end{array} \right..
\end{equation}
Furthermore, the game ${\mathcal G}_{HD}$ always possesses one and only one NE.$~~~~~~~~~~~~~~~~~~~~~~~~~~~~~~~~~~~~~~~~~~~~~~~~~~~~~~~~~~~~~~~~~~~~~~~~~~~~~~~\square$
\end{corollary}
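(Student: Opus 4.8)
The plan is to establish the corollary in three stages—the best-response characterization, existence of the NE, and uniqueness of the NE—in each case reducing the claim to the pure-network results already proved, and leaning throughout on the structural observation stated at the opening of this section.

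First, for the best-response expression (\ref{eq:BR_hybrid}), I would invoke that observation directly: for a fixed profile $\pmb\rho_{-i}$, the interference power seen at $D_i$ is determined solely by the transmit powers $Q_j$ of the other relays, and since $Q_j=\eta\rho_j\sum_n P_n|g_{nj}|^2$ depends only on $\rho_j$ and the channel gains, and \emph{not} on whether $R_j$ employs AF or DF, the quantities $X_i$, $Y_i$, $Z_i$, $W_i$ in (\ref{eq:def_XYZW}) retain exactly the same functional dependence on $\pmb\rho$ as in the pure networks. Hence $u_i^{HD}$ coincides with $u_i^{AF}$ when $i\in\mathcal N_{AF}$ and with $u_i^{DF}$ when $i\in\mathcal N_{DF}$, and maximizing it over $\rho_i\in\mathcal A_i$ returns precisely $\mathcal B_i^{AF}(\pmb\rho)$ or $\mathcal B_i^{DF}(\pmb\rho)$ from Lemma \ref{lemma:BR_AF} and Lemma \ref{lemma:BR}. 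This gives (\ref{eq:BR_hybrid}) at once.

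Second, for existence, I would check the hypotheses of Theorem \ref{thm:existence_NE} player by player. Each $\mathcal A_i=[0,1]$ is compact and convex; each $u_i^{HD}$ is continuous in $\pmb\rho$ (the AF rate composes $\log$ with a rational function whose denominator is strictly positive, and the DF rate is a minimum of two such continuous functions); and each $u_i^{HD}$ is quasi-concave in $\rho_i$, which is exactly Proposition \ref{prop:NE_existence_AF_game} for the AF players and the companion quasi-concavity of $u_i^{DF}$ established there for the DF players. Since every player's utility is of one of these two types, the hypotheses hold for all players, so Theorem \ref{thm:existence_NE} yields at least one NE of $\mathcal G_{HD}$. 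For uniqueness I would then follow the standard-function route of \cite{Yates_JSAC_1995}, exactly as in Propositions \ref{prop:NE_unique_AF} and \ref{prop:NE_unique}: a profile $\pmb\rho^*$ is a NE of $\mathcal G_{HD}$ iff it is a fixed point of the stacked map $\pmb{\mathcal B}^{HD}(\pmb\rho)=[\mathcal B_1^{HD}(\pmb\rho),\ldots,\mathcal B_N^{HD}(\pmb\rho)]^T$, so it suffices to verify that $\pmb{\mathcal B}^{HD}$ is standard in the sense of Definition \ref{eq:standard_func}. As positivity, monotonicity, and scalability are all componentwise conditions, it is enough that each $\mathcal B_i^{HD}$ satisfies them; for $i\in\mathcal N_{AF}$ this component is $\mathcal B_i^{AF}$, shown standard in Appendix \ref{append:Prop_NE_unique_AF}, and for $i\in\mathcal N_{DF}$ it is $\mathcal B_i^{DF}$, shown standard in Appendix \ref{append:Prop_2}. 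Uniqueness of the fixed point then follows from \cite[Thm.~1]{Yates_JSAC_1995}.

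The main obstacle is the legitimacy of this component-by-component splicing in the uniqueness argument: the standard-function properties for the pure games were proved for vector maps whose entries are all of one type, and one must rule out that those proofs secretly exploit uniformity across components. The crux is therefore to confirm that the positivity, monotonicity, and scalability of $\mathcal B_i^{AF}$ (resp.\ $\mathcal B_i^{DF}$) depend only on how that single component varies with $\pmb\rho$—equivalently, how $\mathcal B_i^{X}$ depends on $W_i$, since $X_i,Y_i,Z_i$ are constants while $W_i$ is a nonnegative linear form in $\pmb\rho_{-i}$—and not on the relaying type of the \emph{other} links. By the structural observation, $\mathcal B_i^{HD}$ is literally the same function of $\pmb\rho$ as the corresponding pure-network best response, so each componentwise property transfers verbatim and the splicing is justified.
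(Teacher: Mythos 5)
Your proposal is correct and follows essentially the same route as the paper: the corollary is obtained by reducing to the pure-network results (Propositions \ref{prop:NE_existence_AF_game}--\ref{prop:NE_unique} and Lemmas \ref{lemma:BR_AF}--\ref{lemma:BR}) via the observation that the interference at each destination depends on $\pmb\rho_{-i}$ only through the relays' transmit powers and not on their relaying protocols, so each $u_i^{HD}$ and hence each best response coincides with its pure-network counterpart. Your explicit check that positivity, monotonicity, and scalability are componentwise conditions---so that the spliced map $\pmb{\mathcal B}^{HD}$ inherits the standard-function property from its AF and DF components---is a detail the paper leaves implicit, but it is exactly the justification the paper's one-line citation relies on.
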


Then, replacing the best response update (\ref{eq:BR_algorith}) in Algorithm 1 by the one in (\ref{eq:BR_hybrid}), we can have a best response-based distributed algorithm for the links to achieve the NE of the game ${\mathcal G}_{HD}$. This algorithm is referred to {\bf Algorithm 2}, which is omitted here due to its similarity with Algorithm~1.

\section{Numerical Results}
\begin{figure*}
\centering
 \subfigure[Best responses of the game ${\mathcal G}_{AF}$]
  {\scalebox{0.35}{\includegraphics {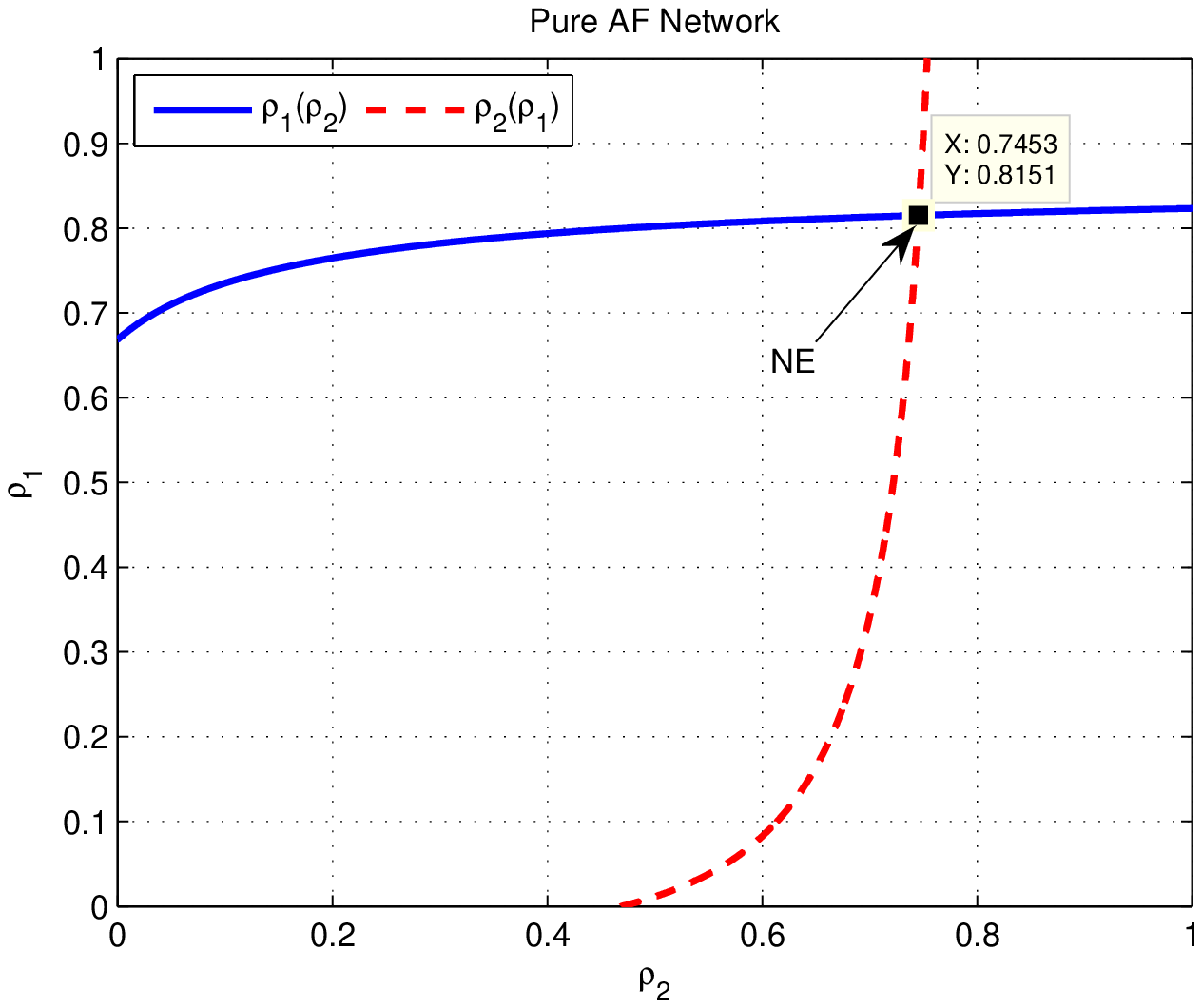}
  \label{fig:NE_2_link_a}}}
\hfil
 \subfigure[Best responses of the game ${\mathcal G}_{DF}$]
  {\scalebox{0.35}{\includegraphics {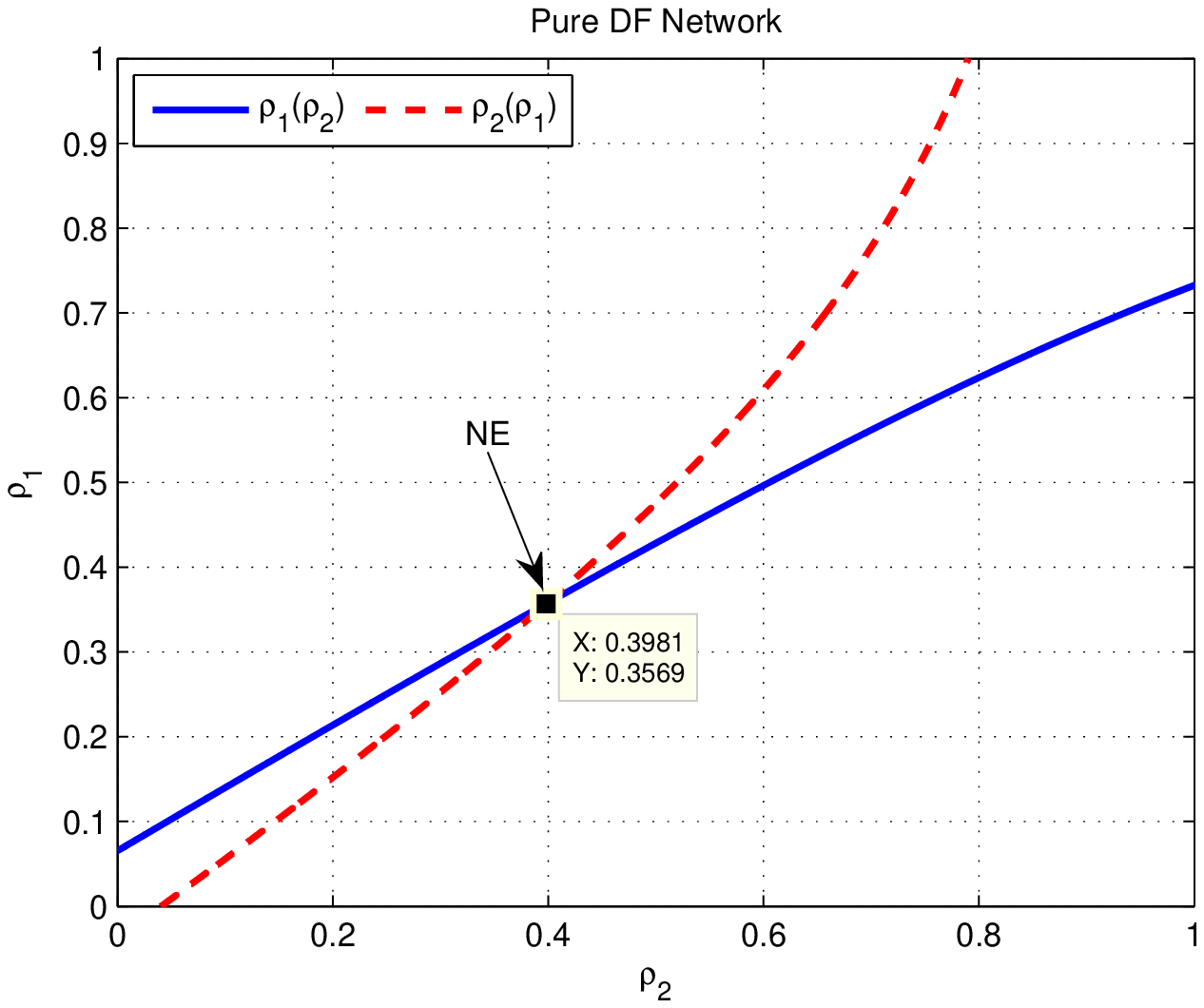}
\label{fig:NE_2_link_b}}}
\hfil
 \subfigure[Best responses of the game ${\mathcal G}_{HD}$]
  {\scalebox{0.35}{\includegraphics {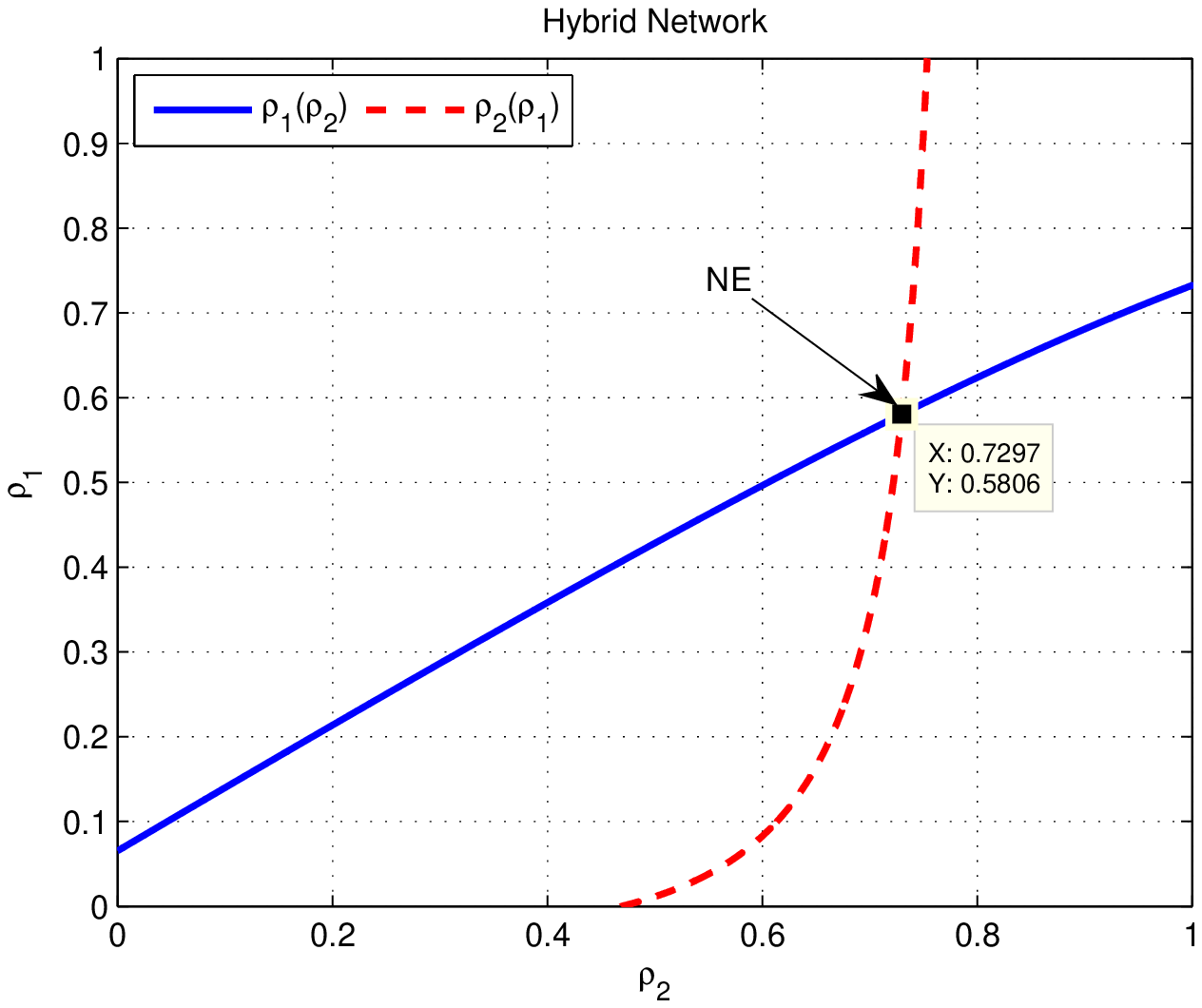}
\label{fig:NE_2_link_c}}}
\hfil
 \subfigure[Convergence of Algorithm 1]
  {\scalebox{0.35}{\includegraphics {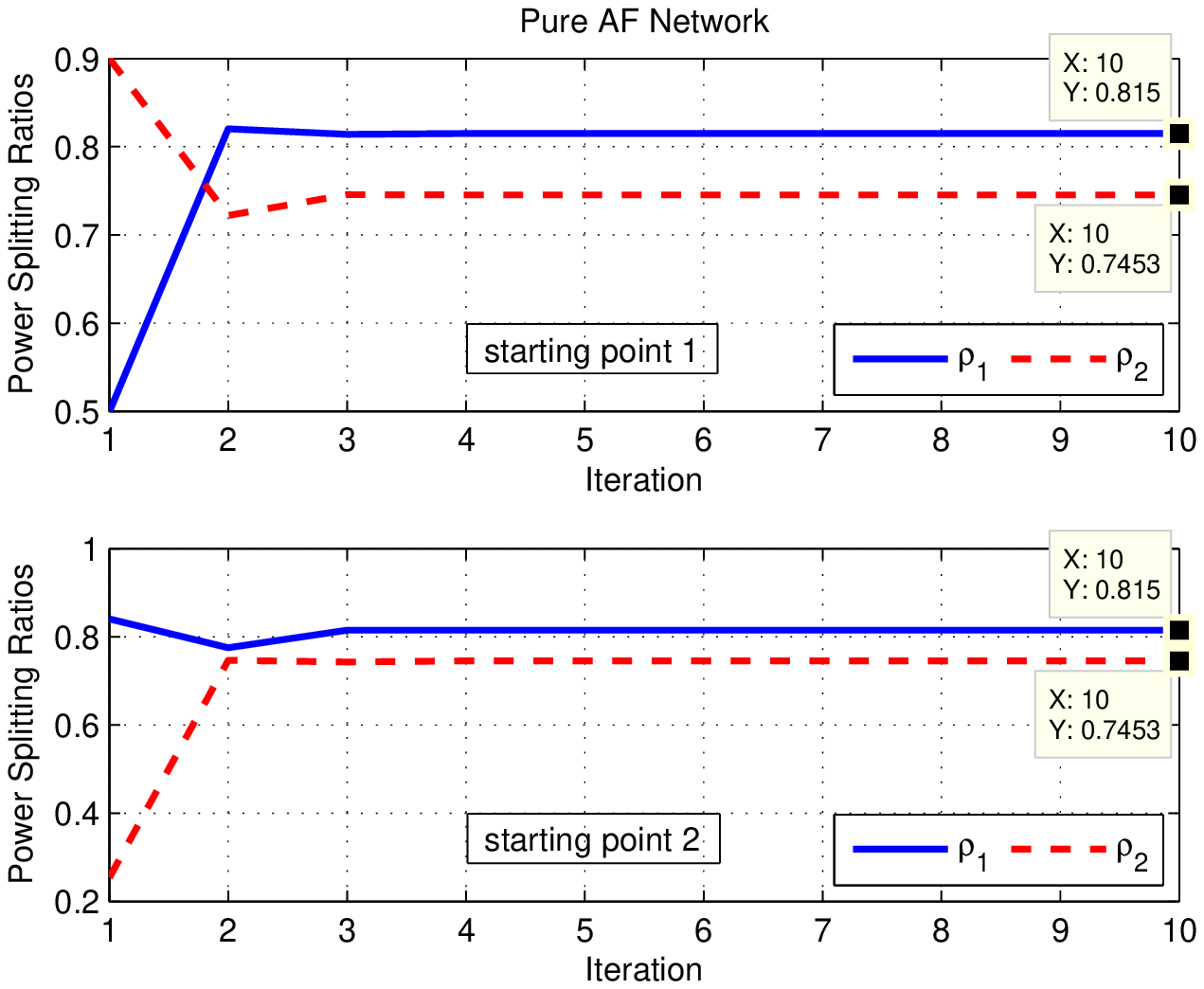}
  \label{fig:NE_2_link_d}}}
\hfil
 \subfigure[Convergence of Algorithm 1]
  {\scalebox{0.35}{\includegraphics {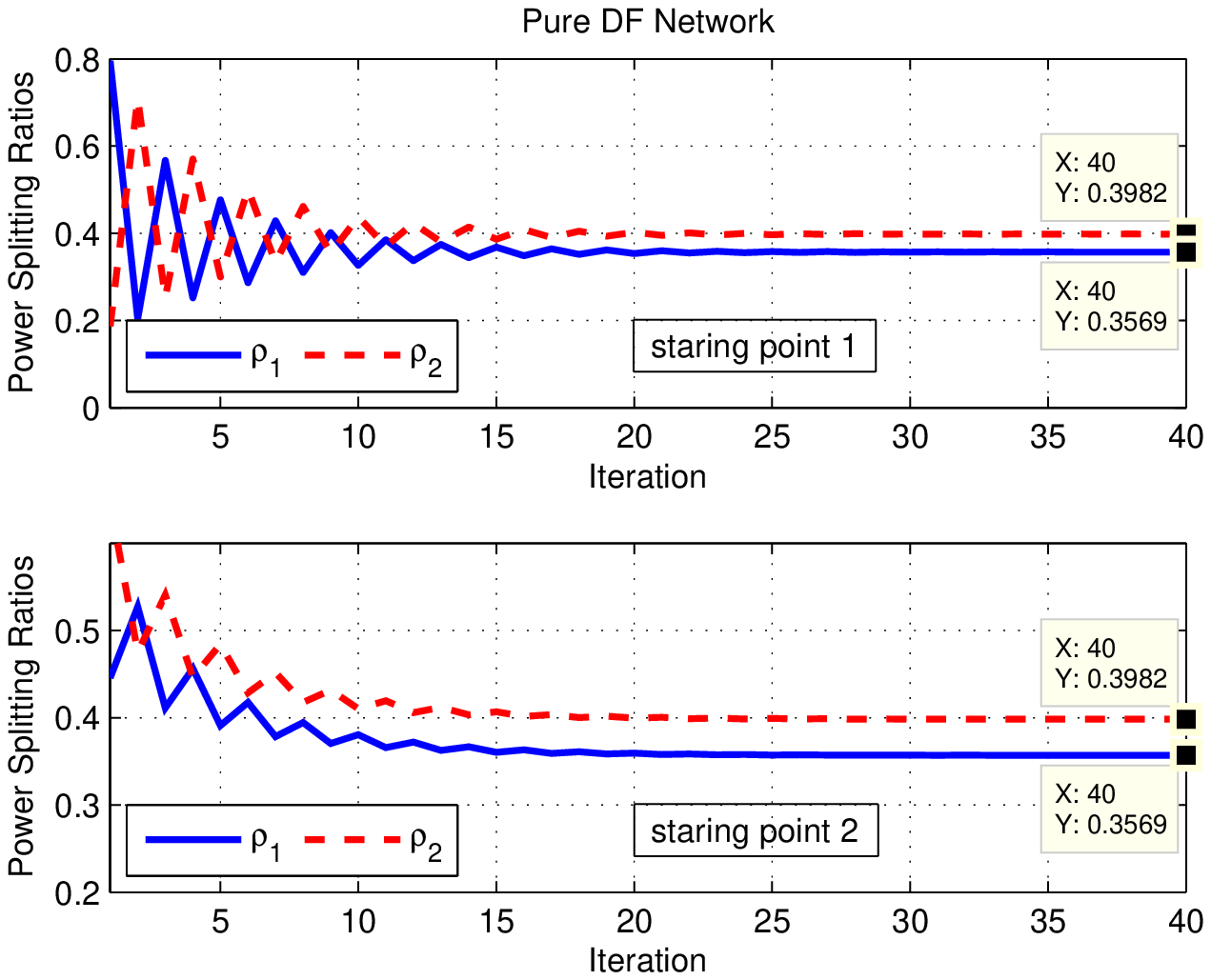}
\label{fig:NE_2_link_e}}}
\hfil
 \subfigure[Convergence of Algorithm 2]
  {\scalebox{0.35}{\includegraphics {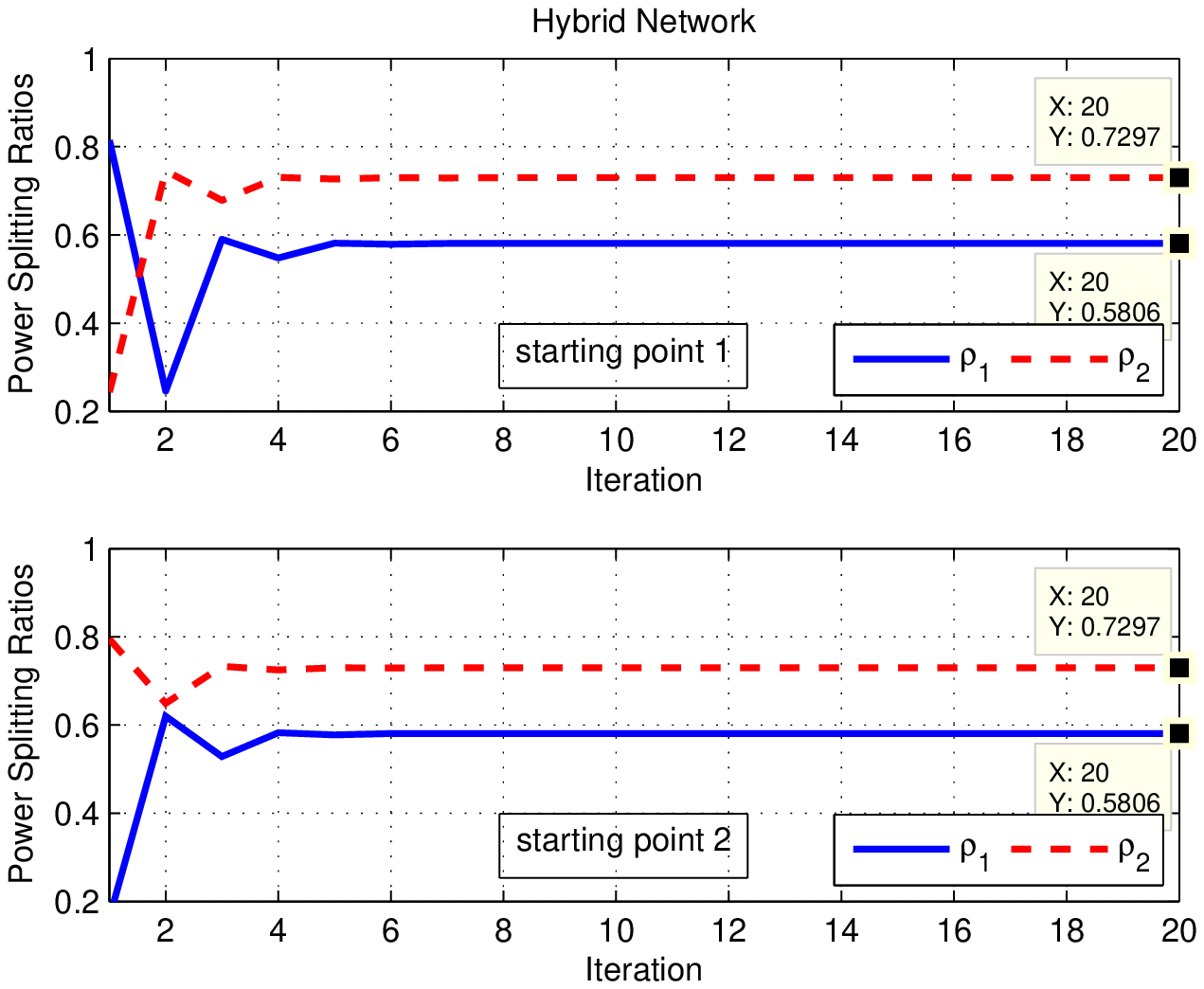}
\label{fig:NE_2_link_f}}}
\caption{The best response functions of the three formulated games and the convergence of Algorithm 1-2 from different starting points in the considered two-link network with parameters given in (\ref{eq:simulation_param_2_link_1})-(\ref{eq:simulation_param_2_link_3}).}
\label{fig:NE_2_link}
\end{figure*}

In this section, we will present some numerical results to illustrate and validate the above theoretical analyses. To obtain meaningful results, we restrict our attention to a linear topology for each link. Specifically, $S_i$-$R_i$-$D_i$ forms a straight line with unit length, i.e., $d_{S_i D_i}=  d_{S_i R_i} + d_{R_i D_i} = 1, \forall i\in {\mathcal N}$, with $d_{XY}$ denoting the distance between nodes $X$ and $Y$. Note that in order to obtain meaningful insights into the system performance, we treat the node distances as constants in this paper. In practical systems, the spatial node distributions \cite{Ding_SPL_2013_Coop,Ding_TWC_2014_Wireless} should be considered, but it will require a new analytical framework, which is out of the scope of this paper and we would like to consider it as our future work. The channels between all transceiver pairs are assumed to be subject to mutually independent Rayleigh fading. To take into account the impact of path loss, we adopt the channel model that ${\mathbb{E}}\left\{\left|g_{ij}\right|^2\right\} = \left({d_{S_i R_j}}\right)^{ - \tau }$ and ${\mathbb{E}}\left\{\left|h_{ij}\right|^2\right\} = \left({d_{R_i D_j}}\right)^{ - \tau }$, where $\tau \in [2,5]$ is the path loss factor \cite{Chen_SPL_2010}. In all the simulations, without loss of generality, we set $\sigma^2 = 1$ and $\eta = 0.5$.

We first consider a simple network consisting of two links with the following randomly generated parameters:
\begin{equation}\label{eq:simulation_param_2_link_1}
\left[P_1,P_2\right] = \left[5.3080,7.1917\right],
\end{equation}
\begin{equation}\label{eq:simulation_param_2_link_2}
\left[ {\begin{array}{*{20}{c}}
   {{{\left| {{g_{11}}} \right|}^2}} & {{{\left| {{g_{12}}} \right|}^2}}  \\
   {{{\left| {{g_{21}}} \right|}^2}} & {{{\left| {{g_{22}}} \right|}^2}}  \\
\end{array}} \right] = \left[ {\begin{array}{*{20}{c}}
   {2.1713} & {1.4836}  \\
   {3.0937} & {0.9773}  \\
\end{array}} \right],
\end{equation}
\begin{equation}\label{eq:simulation_param_2_link_3}
\left[ {\begin{array}{*{20}{c}}
   {{{\left| {{h_{11}}} \right|}^2}} & {{{\left| {{h_{12}}} \right|}^2}}  \\
   {{{\left| {{h_{21}}} \right|}^2}} & {{{\left| {{h_{22}}} \right|}^2}}  \\
\end{array}} \right] = \left[ {\begin{array}{*{20}{c}}
   {0.4475} & {1.5760}  \\
   {1.5406} & {2.6081}  \\
\end{array}} \right].
\end{equation}

Both links in the game ${\mathcal G}_{AF}$ (${\mathcal G}_{DF}$) adopt the AF (DF) relaying protocol, while link $1$ implements the DF relaying scheme and link $2$ implements the AF relaying scheme in the game ${\mathcal G}_{HD}$. In Fig. \ref{fig:NE_2_link_a}-\ref{fig:NE_2_link_c}, we first plot the best response functions (i.e., $\rho_1\left(\rho_2\right)$ and $\rho_2\left(\rho_1\right)$) of the three games, respectively. With reference to \cite{Lasaulce_SPM_2009}, the intersection points of the best response functions are actually the NEs of the corresponding games. From Fig. \ref{fig:NE_2_link_a}-\ref{fig:NE_2_link_c}, we can see that the two curves only admit one intersection point in all cases, which indicates that all the formulated games in the considered two-link network possess a unique NE. In Fig. \ref{fig:NE_2_link_d}-\ref{fig:NE_2_link_f}, the proposed algorithms are executed to achieve the corresponding NEs from randomly generated initial points. It can be seen from Fig. \ref{fig:NE_2_link_d}-\ref{fig:NE_2_link_f} that the proposed algorithms can converge to the corresponding NEs obtained in Fig. \ref{fig:NE_2_link_a}-\ref{fig:NE_2_link_c} from different starting points. Hence, the observations from Fig. \ref{fig:NE_2_link_a}-\ref{fig:NE_2_link_f} verify the correctness of our theoretical analyses.

To show that the proposed algorithm can also converge to the NE in multi-link scenarios, we illustrate its convergence performance for an example of a randomly generated four-link hybrid network in Fig. \ref{fig:hybrid_4_link}. As can be observed from Fig. \ref{fig:hybrid_4_link}, the proposed algorithm can converge to the same values (i.e., the NE) from two different initial points, which further validates the theoretical analyses. Note that due to the space limitation, we only show results in Figs. \ref{fig:NE_2_link}-\ref{fig:hybrid_4_link} for one random realization of transmit powers and channel gains, although similar results can also be shown for other realizations.

Next, we investigate the average sum-rate of the relay interference networks that implement the proposed power splitting scheme. We consider both a two-link scenario and a multi-link scenario. For simplification and illustrative purpose, we consider that all links in the considered networks are mutually parallel. Thus, the two-link setting is characterized by the interlink distance, denoted by $d_L$. In the multi-link setting, we assume that the distances between adjacent links are equal. Hence, this scenario is characterized by the number of links $N$ and the distance between the farthest two links, denoted by $d_{\max}$. In the following figures, each curve is plotted by averaging over the results from $10000$ independent channel realizations. In addition, it is straightforward to imagine that the performance curves of the hybrid network should be located between that of the AF and DF network. Thus, we will omit the curves of hybrid networks in the following to avoid crowded figures.

Fig. \ref{fig:average_sum_rate_2_link} illustrates the average sum-rate of a two-link network, where the performances are compared across the optimal sum-rate obtained by the centralized optimization problem (\ref{eq:centralized_optimization_AF}) solved via exhaustive search, the proposed game-theoretical approach, the random scheme in which the power splitting ratios are randomly generated over $[0,1]$. Both AF and DF networks with symmetric and asymmetric topologies are simulated. We can observe from Fig. \ref{fig:average_sum_rate_2_link} that the proposed game-theoretical method outperforms the the random scheme in the AF network when $d_L$ exceeds a certain value, and is always superior to the random scheme in the DF network for all cases. In addition, the performance gap between these two schemes becomes larger when the inter-link distance $d_L$ increases. Moreover, it can be seen from Fig. \ref{fig:average_sum_rate_2_link} that the proposed game-theoretical approach suffers performance loss compared to the optimal scheme when the inter-link distance is very small, i.e., in high interference scenarios. However, as the inter-link distance increases, it approaches the centralized optimal scheme in the AF network and quickly coincides with the centralized optimal scheme in the DF network. Therefore, we can claim that the proposed game-theoretical approach can achieve a near-optimal performance on average, especially for the scenarios with medium and large interlink distance (i.e., relatively low and moderate interference).
\begin{figure}
\centering \scalebox{0.5}{\includegraphics{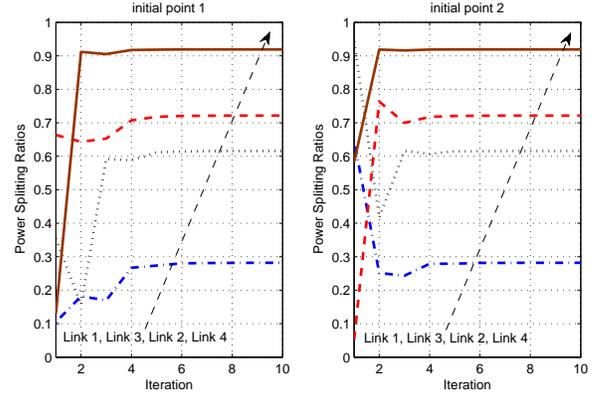}}
\caption{Convergence of the proposed algorithm for a randomly generated four-link hybrid network with two different initial points, ${\mathcal N}_{DF} = [1\;3]$ and ${\mathcal N}_{AF} = [2\; 4]$.\label{fig:hybrid_4_link}}
\end{figure}

\begin{figure*}
\centering
 \subfigure[AF network]
  {\scalebox{0.5}{\includegraphics {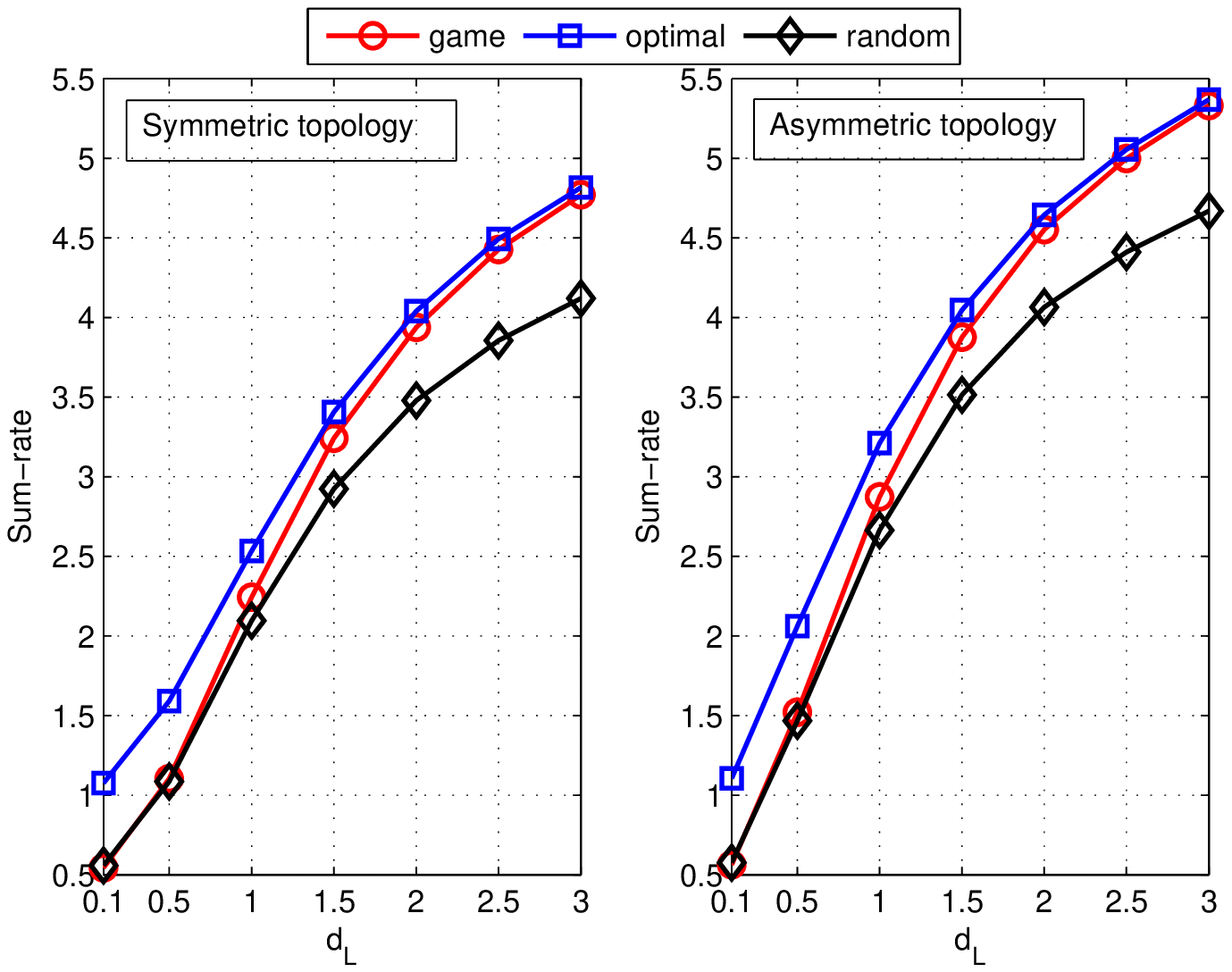}
  \label{fig:sum_rate_AF}}}
\hfil
 \subfigure[DF network]
  {\scalebox{0.5}{\includegraphics {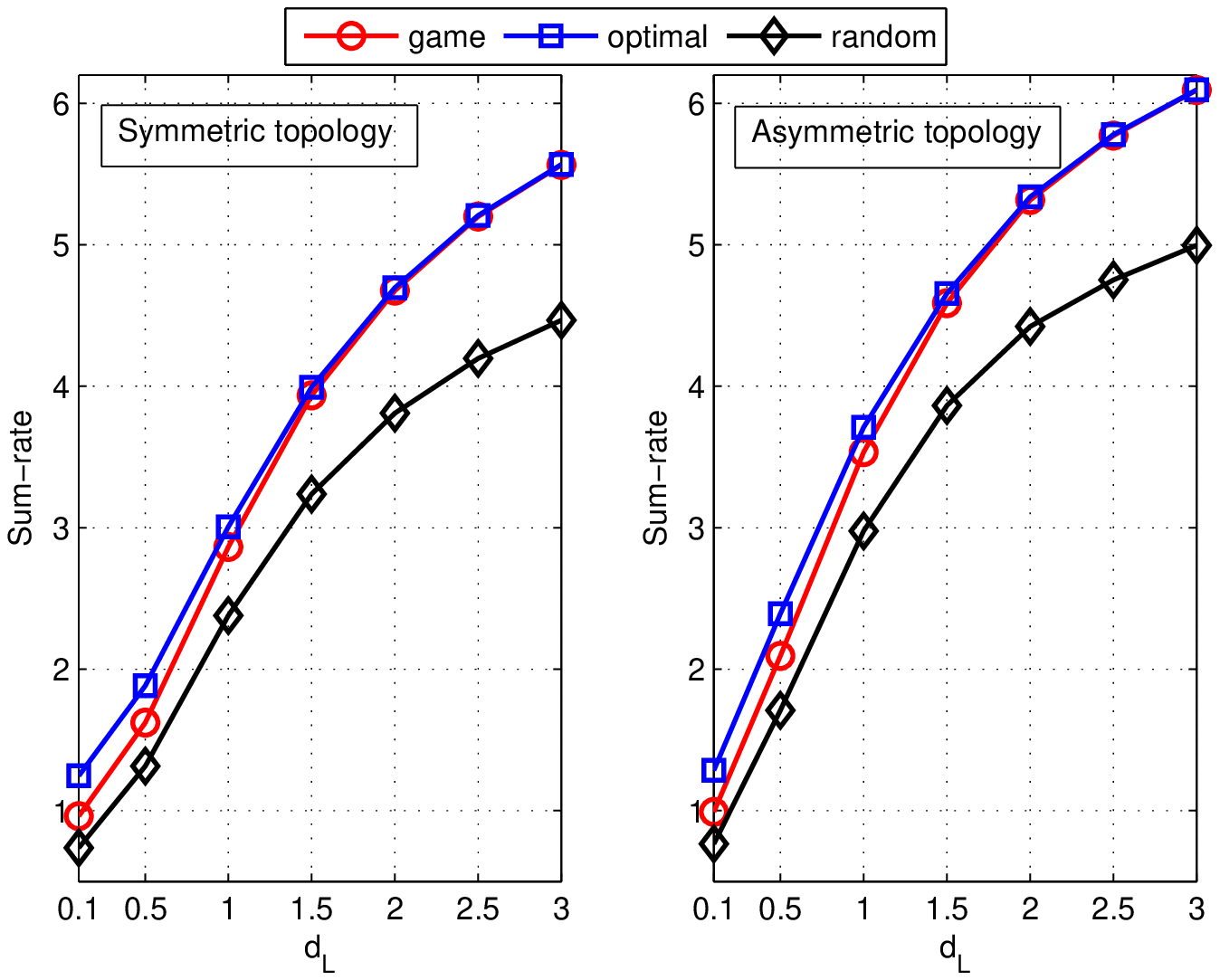}
\label{fig:sum_rate_DF}}}
\caption{The average sum-rates of two-link AF and DF networks for (a) symmetric network with $d_{S_i R_i} = d_{R_i D_i} = 0.5$ and $P_i = 15~ {\rm{dB}},~\forall i = 1,2$, (b) asymmetric network with $d_{S_1 R_1} = d_{R_2 D_2} = 0.25$ and $P_i = 15~ {\rm{dB}},~\forall i = 1,2$.}
\label{fig:average_sum_rate_2_link}
\end{figure*}

\begin{figure*}
\centering
 \subfigure[Average sum-rate]
  {\scalebox{0.5}{\includegraphics {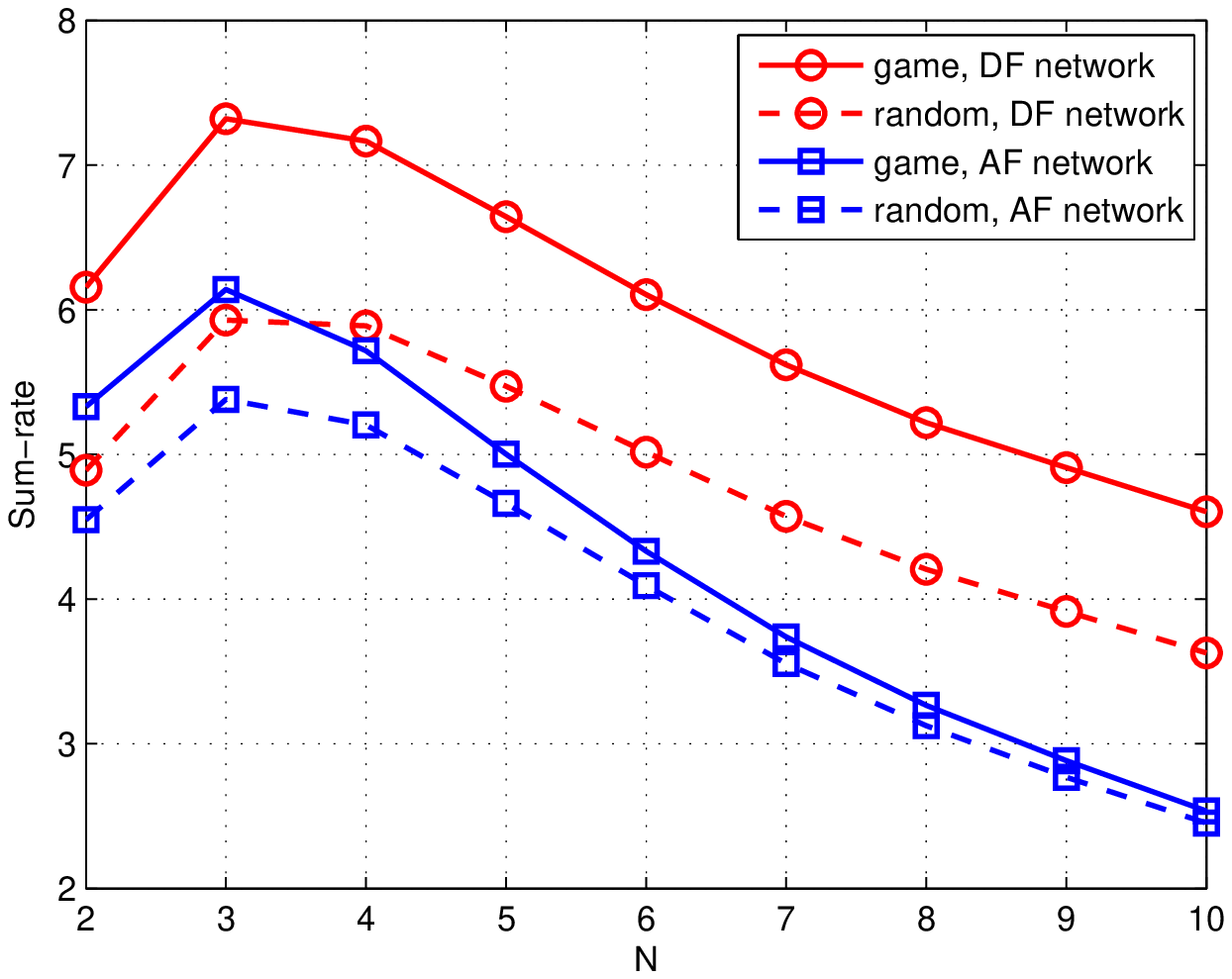}
  \label{fig:impact_link_num_sum_rate}}}
\hfil
 \subfigure[Average power splitting ratio]
  {\scalebox{0.5}{\includegraphics {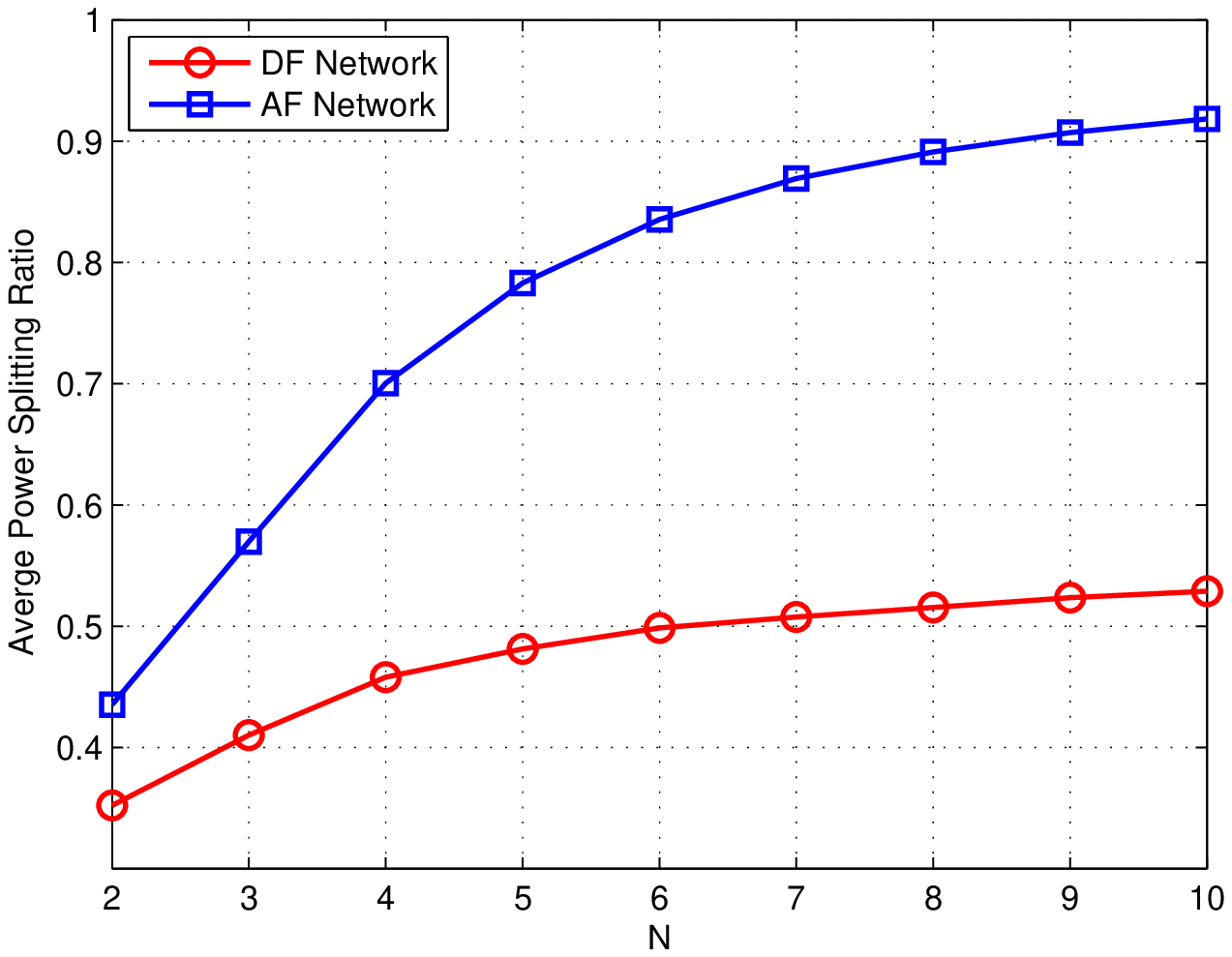}
\label{fig:impact_link_num_rho}}}
\caption{The impact of link numbers on (a) average sum-rate and (b) average power splitting ratio in AF and DF networks with $d_{\max} = 5$ and $P_i = 15$ (dB), $\forall i$.}
\label{fig:impact_link_num}
\end{figure*}

\begin{figure*}
\centering
 \subfigure[Average sum-rate]
  {\scalebox{0.5}{\includegraphics {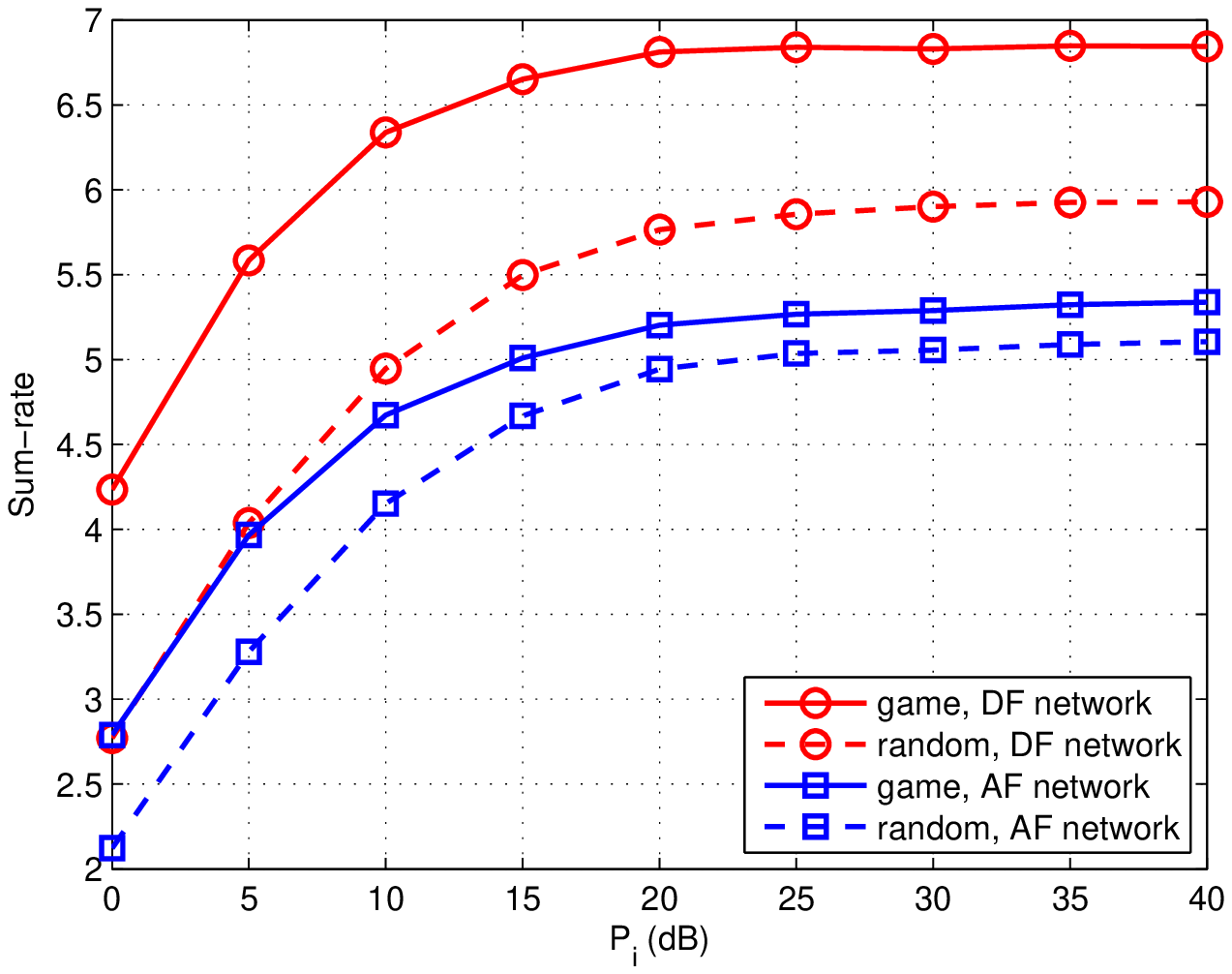}
  \label{fig:impact_SNR_sum_rate}}}
\hfil
 \subfigure[Average power splitting ratio]
  {\scalebox{0.5}{\includegraphics {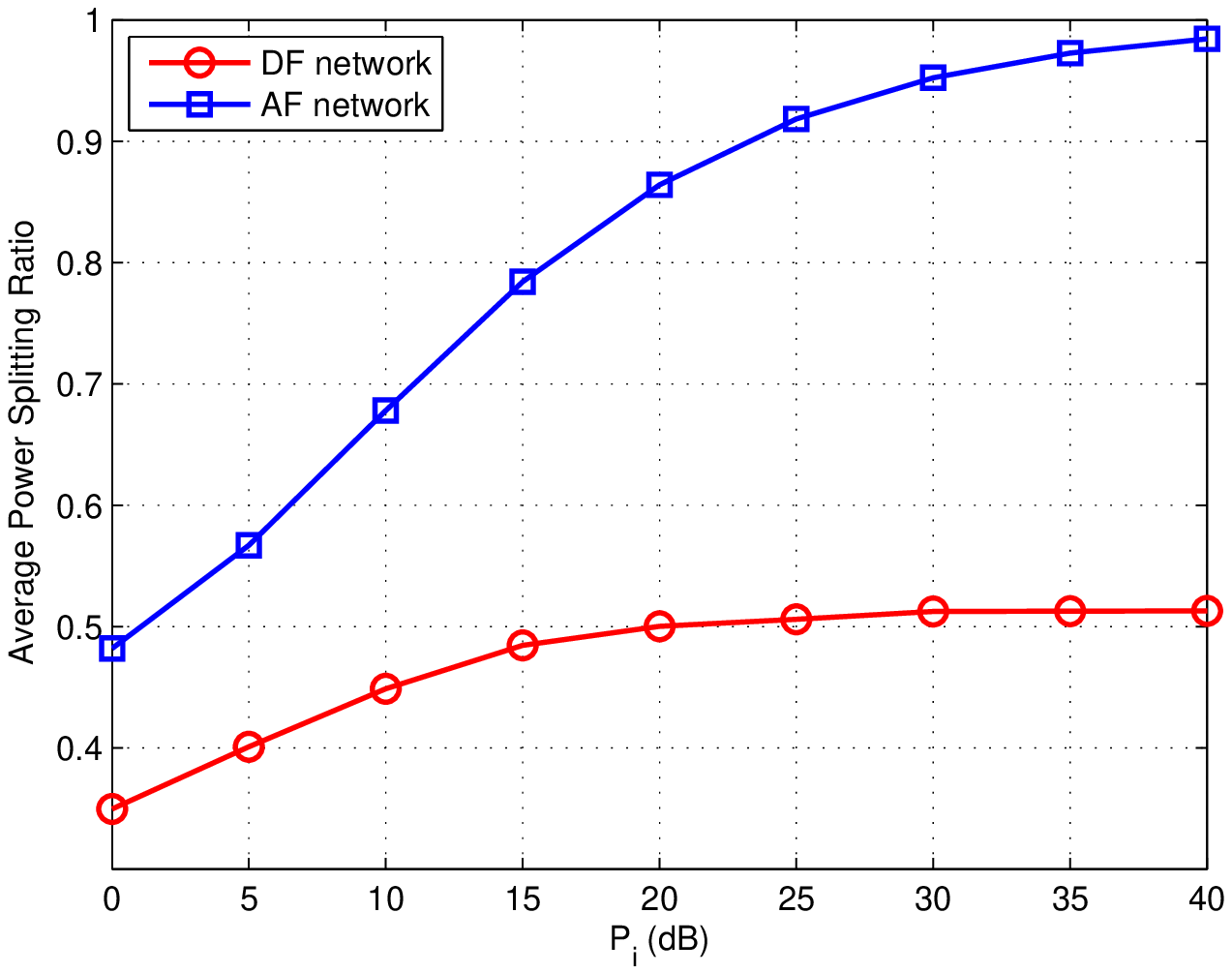}
\label{fig:impact_SNR_rho}}}
\caption{The impact of the transmit powers on (a) average sum-rate and (b) average power splitting ratio in AF and DF networks with $d_{\max} = 5$ and $N = 5$.}
\label{fig:impact_SNR}
\end{figure*}

Figs. \ref{fig:impact_link_num} and \ref{fig:impact_SNR} respectively demonstrate the impacts of the number of links and the transmit powers on the average sum-rates and average power splitting ratios of AF and DF networks in the multi-link scenario, where all relays are assumed in middle positions (i.e., $d_{S_i R_i} = d_{R_i D_i} = 0.5,~\forall i$). In Fig. \ref{fig:impact_link_num} (a), we plot the average sum-rate curves of the AF and DF networks achieved by the proposed game-theoretical scheme and the random scheme versus the number of links with $d_{\max} = 5$ and $P_i = 15$ dB, $\forall i$. Note that the performance of the centralized optimal scheme is omitted here because the corresponding optimization problem is non-convex and thus cannot be efficiently solved in a multi-link scenario. From Fig. \ref{fig:impact_link_num} (a), we observe that the average sum-rates of both AF and DF networks first increase and then keep decreasing with the growth of the number of links. This observation is understandable. Specifically, the initial sum-rate increase is actually a multiplexing gain as more links share the same spectrum with relatively low mutual interference. Nevertheless, with a further increase in the number of links, the interlink interference becomes stronger, which leads to a monotonically decreasing sum-rate. We can also see from Fig. \ref{fig:impact_link_num} (a) that the proposed game-theoretical approach always outweighs the random method in both AF and DF networks for all cases. The performance gap between these two schemes in the DF network is significantly larger than that in the AF networks. This indicates that the proposed game-theoretical approach can achieve a higher performance improvement in a DF network than the one in a AF network. In addition, it can be observed from Fig. \ref{fig:impact_link_num} (a) that the performance gap decreases gradually as the number of links increases, which is due to a more severe interference. Fig. \ref{fig:impact_link_num} (b) depicts the average power splitting ratios\footnote{Since plotting the curve for each link's power splitting ratio will create too many curves, which is difficult to illustrate, we choose the average power splitting ratio of all links as the performance metric for the purpose of facilitating the illustration. This metric can effectively reflect the changes of the ratios in the majority of links.} of both AF and DF networks, when the proposed game-theoretical scheme is implemented. We can observe from this figure that the average power splitting ratio of the AF networks is always larger than that of the DF networks. Furthermore, the average power splitting ratios of both AF and DF networks experience a steady increase when the number of links grows. A similar phenomenon can be observed from Fig. \ref{fig:impact_SNR} (b), in which the curves for the average power splitting ratios of both AF and DF networks are plotted versus the sources' transmit powers. This is because increasing the number of links and rising the transmit powers of all links achieve the same effect as increased mutual interference. The impact of the transmit powers of all links on the average sum-rates of both AF and DF networks is shown in Fig. \ref{fig:impact_SNR} (a). As can be observed from this figure, the average sum-rate achieved by the proposed game-theoretical approach steadily increases from a low to a moderate SNR region and tends to get saturated at high SNR, above 20 (dB). This indicates that at high SNR, the power control at the sources should be jointly considered with the power splitting at the relays to further improve the sum-rate, which will be considered in our future work. Finally, we can see from Figs. \ref{fig:impact_link_num} (a) and \ref{fig:impact_SNR} (a) that a DF network can achieve a higher average sum-rate than an AF network with the same settings.

\begin{figure}
\centering \scalebox{0.55}{\includegraphics{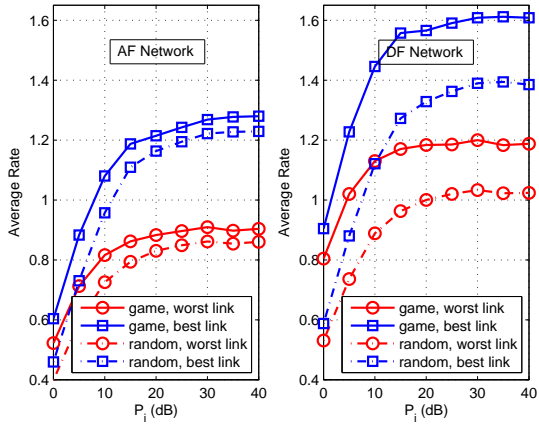}}
\caption{The average rate of the best and worst links in AF and DF networks with $d_{\max} = 5$ and $N = 5$.\label{fig:best_worst_link}}
\end{figure}

In Fig. \ref{fig:best_worst_link}, we plot the average rate curves for the best and worst links in both AF and DF networks with the same network setting as in Fig. \ref{fig:impact_SNR}. Similar phenomena shown in Fig. \ref{fig:impact_SNR} (a) can also be observed in this figure. Besides, we can see from Fig. \ref{fig:best_worst_link} that compared with the random scheme, the proposed game-theoretical scheme can effectively improve the average rates of the best and worst links in both AF and DF networks. Furthermore, this performance improvement in the DF network is shown to be more significant than that in the AF network.

\section{Conclusions}
In this paper, we developed a game-theoretical framework to address the distributed power splitting problem for simultaneous wireless information and power transfer (SWIPT) in relay interference channels. We formulated non-cooperative games for three different network scenarios, in which each link is modeled as a strategic player who aims to maximize its individual achievable rate by choosing the dedicated relay's power splitting ratio. We showed that the formulated games always achieve a unique Nash equilibrium (NE). Best response-based distributed algorithms with provable convergence were also developed to achieve the NEs. The numerical results showed that the proposed algorithms can converge to the corresponding NEs from different starting points, and the developed game-theoretical approach can achieve a near-optimal network-wide performance on average, especially for the scenarios with relatively low and moderate interference.
\begin{appendix}
\subsection{Proof of Proposition \ref{prop:NE_existence_AF_game}}\label{append:prop1}
Firstly, it is straightforward to observe that the utility function ${u_i^{AF}}\left( {{\rho _i},{\pmb \rho _{ - i}}} \right)$ is continuous in $\rho_i$. Then, a feasible method to prove the quasi-concavity of the utility function ${u_i^{AF}}\left( {{\rho _i},{\pmb \rho _{ - i}}} \right)$ is to use the following theorem  \cite[pp. 99]{Boyd_Convex_2004}:
\begin{theorem}\label{thm:quasi-concave}
A continuous function $f:\mathbb{R} \to \mathbb{R}$ is quasi-concave if and only if at least one of the following conditions holds:
\begin{itemize}
  \item $f$ is non-decreasing
  \item $f$ is non-increasing
  \item there is a point $c \in {\bf{dom}}\;f$ such that for $t\le c$ (and $t \in {\bf{dom}}\;f$), $f$ is non-decreasing, and for $t \ge c$ (and $t \in {\bf{dom}}\;f$), $f$ is non-increasing.
\end{itemize}
\end{theorem}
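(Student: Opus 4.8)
The plan is to prove Theorem~\ref{thm:quasi-concave} by working with the upper level sets $S_\alpha = \left\{ t \in \mathbb{R} : f(t) \ge \alpha \right\}$ together with the classical fact that $f$ is quasi-concave if and only if every $S_\alpha$ is convex, i.e., an interval. Since the statement is an equivalence, I would prove the two directions separately: sufficiency (each of the three listed conditions implies quasi-concavity) and necessity (quasi-concavity together with non-monotonicity forces the unimodal third condition).

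For sufficiency I would check that each condition makes every $S_\alpha$ an interval. If $f$ is non-decreasing or non-increasing, $S_\alpha$ is a half-line and hence an interval. For the unimodal case I would take $x < z < y$ with $x,y \in S_\alpha$ and split at the peak $c$: if $z \le c$ the non-decreasing branch gives $f(z) \ge f(x) \ge \alpha$, and if $z \ge c$ the non-increasing branch gives $f(z) \ge f(y) \ge \alpha$; either way $z \in S_\alpha$. This direction is routine and I anticipate no difficulty.

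For necessity I would assume $f$ quasi-concave but neither non-decreasing nor non-increasing, and then construct the peak. Non-monotonicity supplies a descent $a_1 < a_2$ with $f(a_1) > f(a_2)$ and an ascent $b_1 < b_2$ with $f(b_1) < f(b_2)$. For any level $\alpha$ with $f(a_2) < \alpha \le f(a_1)$, the interval $S_\alpha$ contains $a_1$ but not $a_2$, so its right endpoint is at most $a_2$; since the level sets are nested, this bound persists for every higher level. Symmetrically, the ascent forces every sufficiently high level set to have left endpoint at least $b_1$. Writing $M = \sup f$, it follows that for every $\alpha \in \left( \max\left( f(a_2), f(b_1) \right),\, M \right)$ the set $S_\alpha$ is a nonempty, closed (by continuity) and bounded, hence compact, sub-interval of $[b_1, a_2]$. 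These nested compact intervals have nonempty intersection as $\alpha \uparrow M$, which simultaneously shows $M < \infty$ and that $M$ is attained at some point $c$.

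Once a global maximizer $c$ is secured, the two monotonicity branches drop out of level-set convexity: for $x < y \le c$ the set $S_{f(x)}$ contains both $x$ and $c$ (since $f(c) = M \ge f(x)$), hence contains $[x,c] \ni y$, giving $f(y) \ge f(x)$; the symmetric argument on $[c,\infty)$ yields the non-increasing branch, establishing the third condition. The step I expect to be the main obstacle is precisely the attainment of a finite peak in the necessity direction: one must rule out that $f$ escapes to $+\infty$ and must pin a high level set to a bounded interval. The device above---propagating the one-sided bounds from the ascent and descent witnesses through the nesting of the level sets, and then invoking the finite-intersection property of nested nonempty compact intervals---is what makes this step work, and it is where the continuity hypothesis and quasi-concavity are both genuinely used.
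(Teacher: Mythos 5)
The paper itself does not prove this statement: Theorem~\ref{thm:quasi-concave} is imported verbatim (in quasi-concave form) from \cite[pp.~99]{Boyd_Convex_2004} and is used purely as a black box in the proof of Proposition~\ref{prop:NE_existence_AF_game}, so there is no in-paper argument to compare yours against. Judged on its own, your proof is correct and complete. The sufficiency direction via convexity of the upper level sets $S_\alpha$ is routine and you execute it correctly, including the case split at the peak $c$. The necessity direction is where the real content lies, and you handle its one delicate point properly: from a descent pair $a_1<a_2$ and an ascent pair $b_1<b_2$ you trap every sufficiently high level set inside the fixed compact interval $[b_1,a_2]$ (an interval containing $a_1$ but not $a_2>a_1$ must lie to the left of $a_2$, and the bound propagates upward by nesting), and the Cantor nested-interval argument then simultaneously forces $\sup f<\infty$ and produces a maximizer $c$; the two monotone branches follow from level-set convexity exactly as you describe. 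Two cosmetic points you should make explicit: first, that $\max\left(f(a_2),f(b_1)\right)<\sup f$, which follows from $\sup f\ge f(a_1)>f(a_2)$ and $\sup f\ge f(b_2)>f(b_1)$, so that the family of levels you intersect over is nonempty; second, that you are taking quasi-concavity in its upper-level-set characterization (every $S_\alpha$ convex), which is equivalent to the inequality definition. Neither affects correctness.
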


Next, let us prove the quasi-concavity of the utility function by showing that it is first non-decreasing and then non-increasing in the feasible domain. This will be achieved by demonstrating that the first-order derivative of the utility is no less than zero when the variable is smaller than a certain value and is no larger than zero in the remaining domain.

To proceed, we derive the first-order derivative of the function ${u_i^{AF}}\left( {{\rho _i},{\pmb \rho _{ - i}}} \right)$ with respective to (w.r.t) $\rho_i$, which, after some algebra manipulations, can be expressed as
\begin{equation}\label{eq:derivative_AF_utility}
\begin{split}
&\frac{{\partial u_i^{AF}\left( {{\rho _i},{{\pmb \rho} _{ - i}}} \right)}}{{\partial {\rho _i}}} = \\
&\frac{1}{{\ln 2}}\frac{{{C_i}{{\left( {{\rho _i}} \right)}^2} - 2{D_i}{\rho _i} + {D_i}}}{{{{\left[ {{\rho _i}\left( {1 - {\rho _i}} \right){Y_i}{Z_i} - {C_i}{\rho _i} + {D_i}} \right]}^2}/{X_i}{Z_i} + {\rho _i}\left( {1 - {\rho _i}} \right)}},
\end{split}
\end{equation}
where
\begin{subequations}\label{eq:C_i_D_i}
\begin{align}
 &{C_i} = \left( {{X_i} + {Y_i}} \right)\left( {{W_i} + 1} \right) - {Z_i}, \\
 &{D_i} = \left( {{X_i} + {Y_i} + 1} \right)\left( {{W_i} + 1} \right),
\end{align}
\end{subequations}
are defined for the simplicity of notations.

After a careful observation on the right-hand side (RHS) of (\ref{eq:derivative_AF_utility}), we can deduce that the sign of $\frac{{\partial u_i^{AF}\left( {{\rho _i},{{\pmb \rho} _{ - i}}} \right)}}{{\partial {\rho _i}}}$ is only determined by the numerator
\begin{equation}\label{eq:func_kappa}
{\kappa _i}\left( {{\rho _i}} \right) = {C_i}{\left( {{\rho _i}} \right)^2} - 2{D_i}{\rho _i} + {D_i},
\end{equation}
since the denominator is always large than zero. To further determine the monotonicity of the function ${u_i^{AF}}\left( {{\rho _i},{\pmb \rho _{ - i}}} \right)$, we need to investigate the properties of the quadratic function ${\kappa _i}\left( {{\rho _i}} \right)$ on the feasible set of $\rho_i$ (i.e., $[0,1]$). Firstly, we note that
\begin{subequations}
\begin{align}
&{\kappa _i}\left( {{0}} \right) = D_i > 0,\\
&{\kappa _i}\left( {{1}} \right) = C_i - D_i = -(Z_i +W_i+1) <0.
\end{align}
\end{subequations}
Then, we can draw all possible shapes of the function ${\kappa _i}\left( {{\rho _i}} \right)$ versus $\rho_i$ for different cases, as depicted in Fig. \ref{fig_kappa_func}. From Fig. \ref{fig_kappa_func}, we can see that in spite of the sign of the term $C_i$, there always exists a point $\epsilon_i \in [0,1]$ such that ${\kappa _i}\left( {{\epsilon_i}} \right) = 0$, ${\kappa _i}\left( {{\rho_i}} \right) > 0$ for $\rho_i < \epsilon_i$, and ${\kappa _i}\left( {{\rho_i}} \right) < 0$ for $\rho_i > \epsilon_i$. This means that on its feasible domain, the function ${u_i^{AF}}\left( {{\rho _i},{\pmb \rho _{ - i}}} \right)$ is increasing in $\rho_i$ before the point $\epsilon_i$ and is decreasing in $\rho_i$ after the point $\epsilon_i$. Then, according to Theorem \ref{thm:quasi-concave}, we can claim that the utility function ${u_i^{AF}}\left( {{\rho _i},{\pmb \rho _{ - i}}} \right)$ is quasi-concave in $\rho_i$.

\begin{figure*}
\centering
 \subfigure[When $C_i > 0$]
  {\scalebox{0.8}{\includegraphics {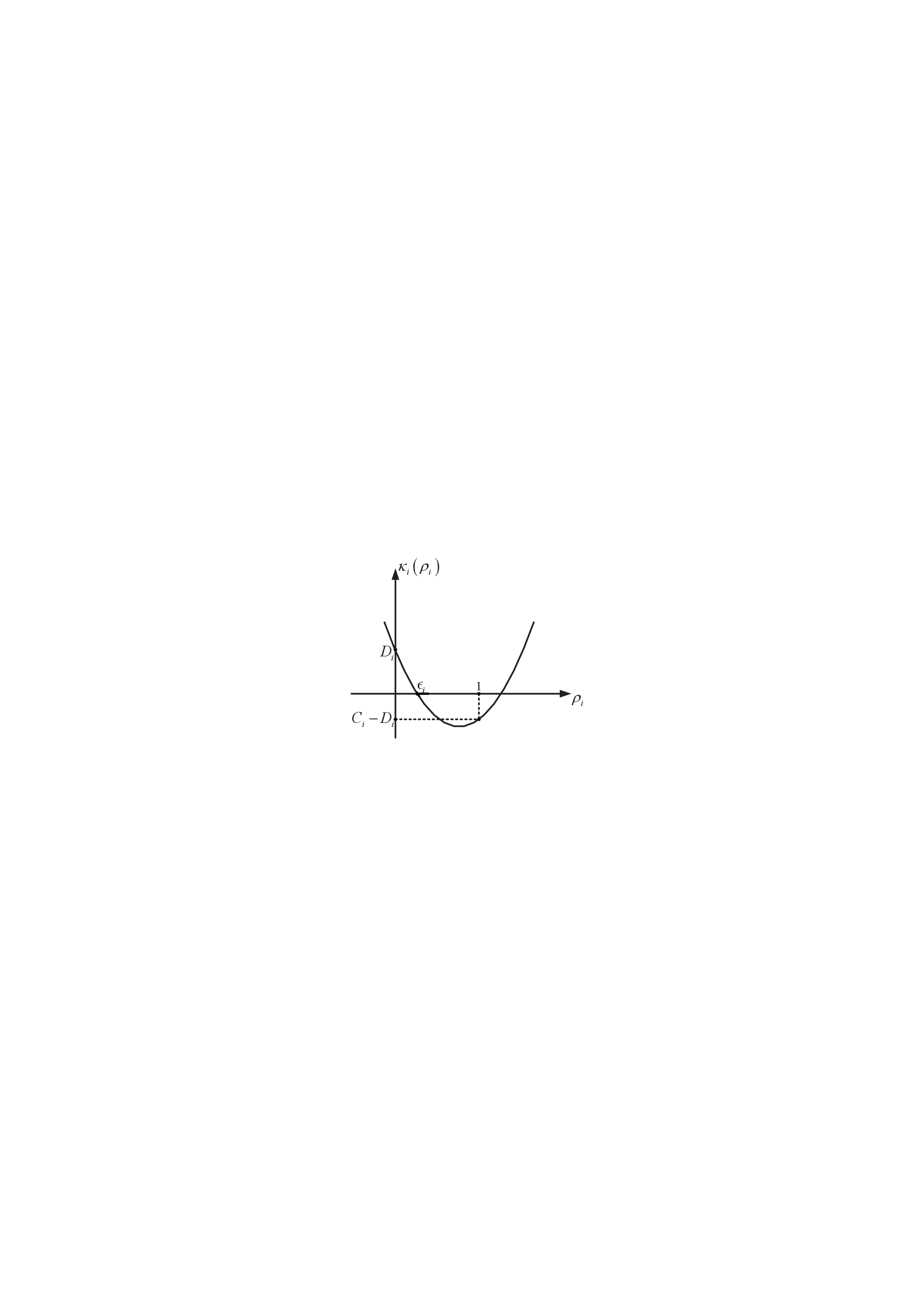}
  \label{fig_kappa_func_a}}}
\hfil
 \subfigure[When $C_i < 0$]
  {\scalebox{0.8}{\includegraphics {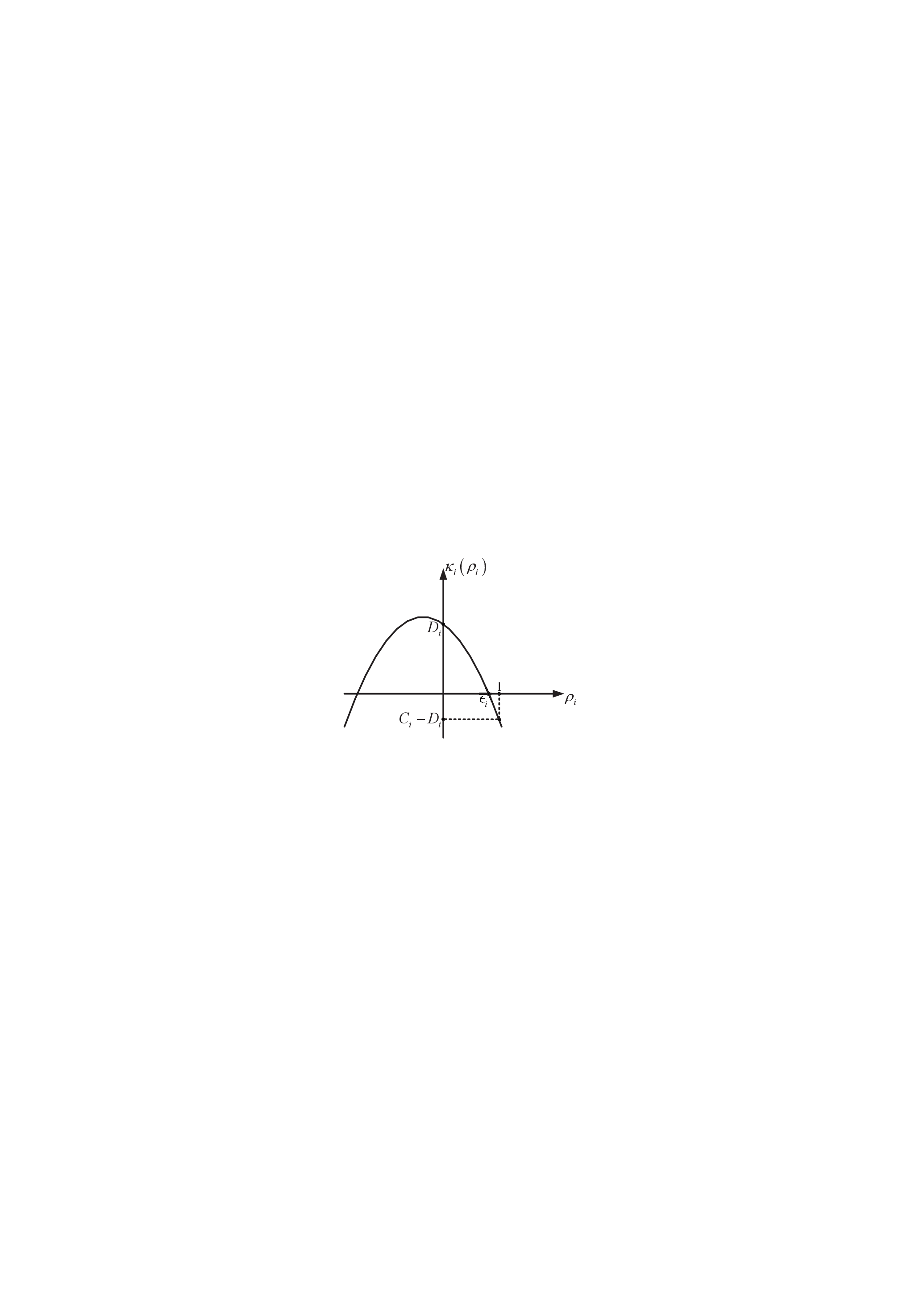}
\label{fig_kappa_func_b}}}
\hfil
 \subfigure[When $C_i = 0$]
  {\scalebox{0.8}{\includegraphics {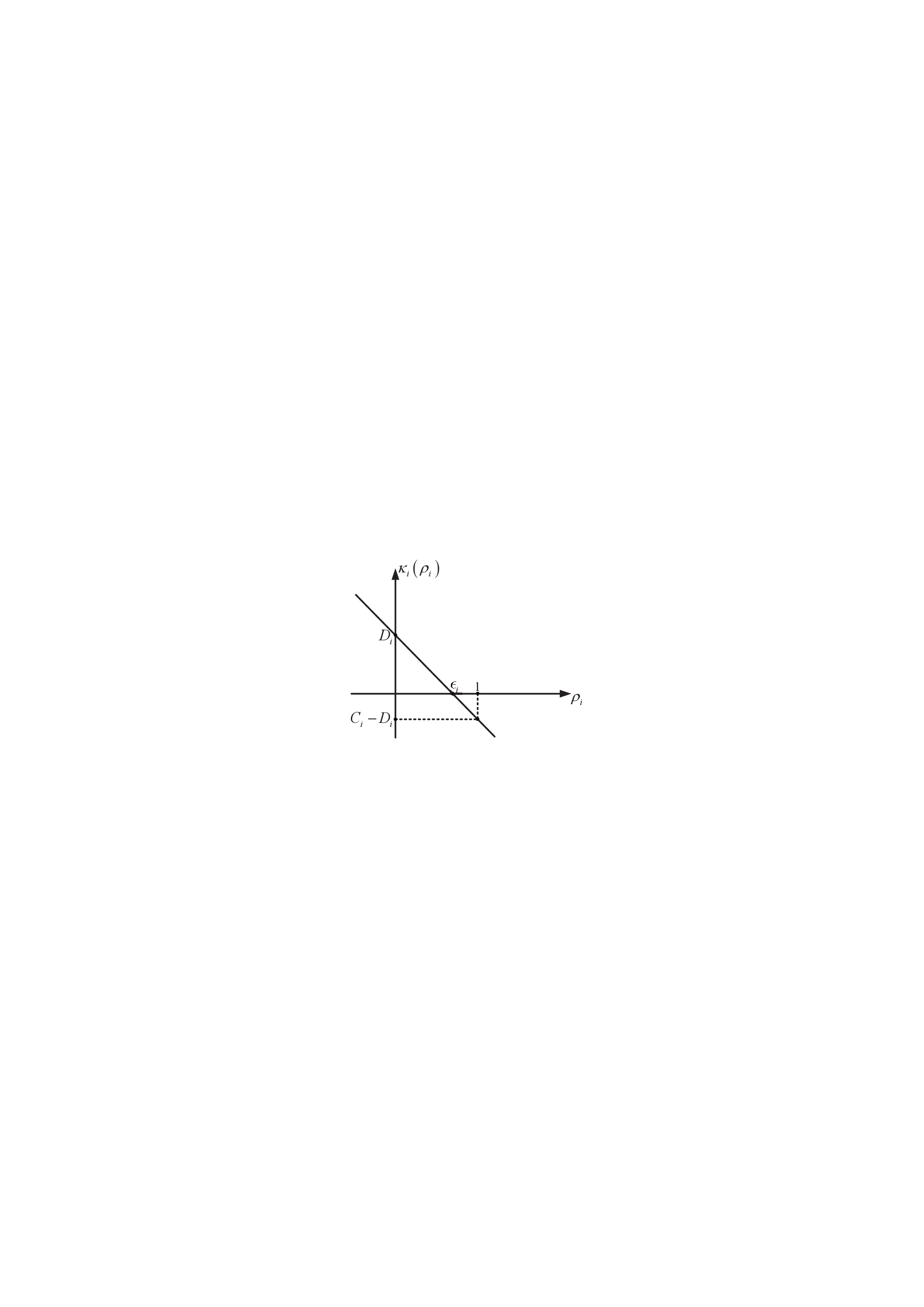}
\label{fig_kappa_func_c}}}
\caption{All possible shapes for the function ${\kappa _i}\left( {{\rho _i}} \right)$ versus $\rho_i$ for different cases.}
\label{fig_kappa_func}
\end{figure*}

In addition, it is easy to check that the feasible set ${\mathcal A}_i$ is compact and convex and the utility function ${u_i^{AF}}\left( {{\rho _i},{\pmb \rho _{ - i}}} \right)$ is continuous in $\pmb \rho$ for any $i \in {\mathcal N}$. Then, with reference to Theorem \ref{thm:existence_NE}, we can further claim that the formulated game ${\mathcal{G}}_{AF}$ for the pure AF network possesses at least one NE.

For the formulated game ${\mathcal{G}}_{DF}$, it is straightforward to check that both $\gamma _{i,1}^{DF}$ and $\gamma _{i,2}^{DF}$ are concave functions of $\rho_i$. Moreover, the minimum of two concave functions is also concave, which means that the SINR $\gamma _i^{DF}$ is a concave function of the strategy $\rho_i$. Since the function $f\left(x\right) = \log\left(1 + x\right)$ is concave and non-decreasing, we can claim\footnote{The composite function $h\left(x\right) = f\left(g\left(x\right)\right)$ is concave in $x$ if $f\left(x\right)$ is concave and non-decreasing, and $g\left(x\right)$ is concave \cite{Boyd_Convex_2004}.} that the utility function ${u_i^{DF}}\left( {{\rho _i},{\pmb \rho _{ - i}}} \right)$ is concave (quasi-concave) in $\rho_i$. Thus, according to Theorem \ref{thm:existence_NE}, we can conclude that the formulated game ${\mathcal{G}}_{DF}$ also admits at least one NE. This completes the proof.

\subsection{Proof of Lemma \ref{lemma:BR_AF}}\label{append:lemma_BR_AF}
According to the analysis in Appendix \ref{append:prop1}, for a given power splitting profile $\pmb \rho$, the expression of the parameter $\epsilon_i$ is actually the best response function of the $i$th link. Moreover, $\epsilon_i$ is one solution of the following quadratic equation:
\begin{equation}\label{eq:qudratic_func_epsilon}
{\kappa _i}\left( {{\epsilon _i}} \right) = {C_i}{\left( {{\epsilon _i}} \right)^2} - 2{D_i}{\epsilon _i} + {D_i} = 0.
\end{equation}
According to the value of $C_i$, we divide the derivation for the expression of $\epsilon_i$ to three cases:

\emph{(a) When $C_i = 0$}: The equation in (\ref{eq:qudratic_func_epsilon}) is simplified as
\begin{equation}
 - 2{D_i}{\epsilon _i} + {D_i} = 0.
\end{equation}
Thus, $\epsilon _i = 1/2$ when $C_i = 0$.

\emph{(b) When $C_i > 0$}: In this case, the quadratic equation in (\ref{eq:qudratic_func_epsilon}) possesses the following two roots:
\begin{equation}
\begin{split}
\epsilon_{i,1} &= \frac{{{D_i} + \sqrt {{{\left( {{D_i}} \right)}^2} - {C_i}{D_i}} }}{{{C_i}}}\\
 &= \frac{{\sqrt {{D_i}} \left( {\sqrt {{D_i}}  + \sqrt {{D_i} - {C_i}} } \right)}}{{{D_i} - \left( {{D_i} - {C_i}} \right)}}\\
 &= \frac{{\sqrt {{D_i}} }}{{\sqrt {{D_i}}  - \sqrt {{D_i} - {C_i}} }},
 \end{split}
\end{equation}
\begin{equation}
\begin{split}
{\epsilon_{i,2}} &= \frac{{{D_i} - \sqrt {{{\left( {{D_i}} \right)}^2} - {C_i}{D_i}} }}{{{C_i}}}\\
 &= \frac{{\sqrt {{D_i}} }}{{\sqrt {{D_i}}  + \sqrt {{D_i} - {C_i}} }}.
 \end{split}
\end{equation}
Since $C_i>0$ and $D_i>D_i-C_i>0$, we have $\epsilon_{i,1}>1$ and $\frac{1}{2}<\epsilon_{i,2}<1$. Thus, the valid expression of $\epsilon_i$ when $C_i > 0$ is given by
\begin{equation}
\begin{split}
\epsilon_i = {\epsilon_{i,2}} = \frac{{\sqrt {{D_i}} }}{{\sqrt {{D_i}}  + \sqrt {{D_i} - {C_i}} }},
 \end{split}
\end{equation}
because the feasible set of $\epsilon_i$ is $[0,1]$.

\emph{(c) When $C_i < 0$}: In this case, we have $D_i-C_i > D_i >0$. Thus, the roots $\epsilon_{i,1}<0$ and $0<\epsilon_{i,2}<\frac{1}{2}$. Hence, the valid expression of $\epsilon_i$ when $C_i < 0$ is the same with the case when $C_i >0 $, i.e.,
\begin{equation}
\begin{split}
\epsilon_i = {\epsilon_{i,2}} = \frac{{\sqrt {{D_i}} }}{{\sqrt {{D_i}}  + \sqrt {{D_i} - {C_i}} }}.
 \end{split}
\end{equation}

Therefore, substituting the expressions of $C_i$ and $D_i$ defined in (\ref{eq:C_i_D_i}) and setting ${\mathcal B}_i^{AF}\left(\pmb \rho  \right) = \epsilon_i$, we can obtain the desired result in (\ref{eq:BR_function_AF}) for the best response function of the $i$th link in pure AF network. This completes the proof.

\subsection{Proof of Proposition \ref{prop:NE_unique_AF}}\label{append:Prop_NE_unique_AF}

To prove this proposition, we need to show that the function ${\bf{\mathcal B}}^{AF}\left( \pmb \rho  \right)$ for the formulated game is standard. Firstly, based on the analysis in Appendix \ref{append:lemma_BR_AF}, it is easy to verify that the best response function satisfies the conditions in Definition \ref{eq:standard_func} when $\left( {{X_i} + {Y_i}} \right)\left( {{W_i} + 1} \right) = {Z_i}$ (i.e., $C_i = 0$). Thus, we only need to show the function
\begin{equation}\label{eq:BR_AF_2}
{\bf{\mathcal B}}_i^{AF}\left( \pmb \rho  \right) = \frac{{\sqrt {{\left( {{X_i} + {Y_i} + 1} \right)\left( {{W_i} + 1} \right)}} }}{{\sqrt {\left( {{X_i} + {Y_i} + 1} \right)\left( {{W_i} + 1} \right)}  + \sqrt {Z_i +W_i+1} }}
\end{equation}
also satisfies the three properties of the standard function for any $i$, which are proved in the following:

\emph{(1) Positivity}: According to the analysis in Appendix \ref{append:lemma_BR_AF}, for any player $i$ and any strategy profile $\pmb \rho$, the best response function ${\bf{\mathcal B}}_i^{AF}\left( \pmb \rho  \right) $ is always larger than 0, which guarantees the positivity of the function ${\bf{\mathcal B}}^{AF}\left( \pmb \rho  \right)$.

\emph{(2) Monotonicity}: Recall the definition of the terms $X_i$, $Y_i$, $Z_i$ and $W_i$ in (\ref{eq:def_XYZW}). We can see that only the term $W_i$ is related to the strategy profile $\pmb \rho$. Suppose $\pmb \rho$ and ${\pmb \rho}^\prime$ are two different strategy profiles and $\pmb \rho \ge {\pmb \rho}^\prime$. Then, the corresponding best response functions of any player $i$ can be written as
\begin{equation}\label{eq:BR_AF_rho}
\begin{split}
{{\mathcal B}_i^{AF}}\left(\pmb \rho  \right) = \frac{{\sqrt {{\left( {{X_i} + {Y_i} + 1} \right)\left( {{W_i} + 1} \right)}} }}{{\sqrt {\left( {{X_i} + {Y_i} + 1} \right)\left( {{W_i} + 1} \right)}  + \sqrt {Z_i +W_i+1} }},
\end{split}
\end{equation}
and
\begin{equation}\label{eq:BR_AF_rho_prime}
\begin{split}
{{\mathcal B}_i^{AF}}\left({\pmb \rho}^\prime  \right) =\frac{{\sqrt {{\left( {{X_i} + {Y_i} + 1} \right)\left( {{W_i^\prime} + 1} \right)}} }}{{\sqrt {\left( {{X_i} + {Y_i} + 1} \right)\left( {{W_i^\prime} + 1} \right)}  + \sqrt {Z_i +W_i^\prime+1} }},
\end{split}
\end{equation}
where ${W_i^\prime} = \sum\nolimits_{j = 1,j \ne i}^N {{\rho _j^\prime}\eta \left( {\sum\nolimits_{n = 1}^N {{P_n}{{\left| {{g_{nj}}} \right|}^2}} } \right){{\left| {{h_{ji}}} \right|}^2}}/ {\sigma ^2}$.

After a careful comparison of (\ref{eq:BR_AF_rho}) and (\ref{eq:BR_AF_rho_prime}), we can see that all the terms in ${{\mathcal B}_i^{AF}}\left(\pmb \rho  \right)$ and ${{\mathcal B}_i^{AF}}\left({\pmb \rho}^\prime  \right)$ are the same except ${W_i}$ and ${W_i^\prime}$. Hence, the inequity ${{\mathcal B}_i^{AF}}\left(\pmb \rho  \right) \ge{{\mathcal B}_i^{AF}}\left({\pmb \rho}^\prime  \right)$ holds if we can prove ${{\mathcal B}_i^{AF}}\left(W_i \right) \ge{{\mathcal B}_i^{AF}}\left({{W_i^\prime}} \right)$. Moreover, we have ${W_i} \ge {W_i^\prime}$ since $\pmb \rho \ge {\pmb \rho}^\prime$. Thus, the proof of ${{\mathcal B}_i^{AF}}\left(\pmb \rho  \right) \ge{{\mathcal B}_i^{AF}}\left({\pmb \rho}^\prime  \right)$ is equivalent to proving that the ${{\mathcal B}_i^{AF}}\left(W_i \right)$ is non-decreasing in $W_i$. To proceed, we re-write the best response function ${{\mathcal B}_i^{AF}}\left(\pmb \rho  \right)$ as
\begin{equation}\label{eq:BR_AF_rho_rewrite}
\begin{split}
{{\mathcal B}_i^{AF}}\left(\pmb \rho  \right) = \frac{1}{{1 + \sqrt {\frac{{{Z_i}}}{{\left( {{X_i} + {Y_i} + 1} \right)\left( {{W_i} + 1} \right)}} + \frac{1}{{{X_i} + {Y_i} + 1}}} }},
\end{split}
\end{equation}
From (\ref{eq:BR_AF_rho_rewrite}), we can easily observe that ${{\mathcal B}_i^{AF}}\left(\pmb \rho  \right)$ is increasing in $W_i$, which complete the proof of monotonicity.

\emph{(3) Scalability}: For any given $\alpha > 1$, we have
\begin{equation}\label{}
\begin{split}
\alpha{{\mathcal B}_i^{AF}}\left(\pmb \rho  \right) &= \frac{\alpha}{{1 + \sqrt {\frac{{{Z_i}}}{{\left( {{X_i} + {Y_i} + 1} \right)\left( {{W_i} + 1} \right)}} + \frac{1}{{{X_i} + {Y_i} + 1}}} }}\\
&= \frac{1}{{\frac{1}{\alpha } + \sqrt {\frac{{{Z_i}}}{{{\alpha ^2}\left( {{X_i} + {Y_i} + 1} \right)\left( {{W_i} + 1} \right)}} + \frac{1}{{{\alpha ^2}\left( {{X_i} + {Y_i} + 1} \right)}}} }} \\
&> \frac{1}{{1 + \sqrt {\frac{{{Z_i}}}{{\left( {{X_i} + {Y_i} + 1} \right)\left( {\alpha {W_i} + 1} \right)}} + \frac{1}{{\left( {{X_i} + {Y_i} + 1} \right)}}} }}\\
&={{\mathcal B}_i^{AF}}\left(\alpha\pmb \rho  \right),
\end{split}
\end{equation}
which proves that the best response function ${{\mathcal B}_i^{AF}}\left(\pmb \rho  \right)$ meets the scalability property. This completes the proof.

\subsection{Proof of Lemma \ref{lemma:BR}}\label{append:lemma_BR_function_DF}

Let $\rho_i^*$ denote the best response of the $i$th link corresponding to a given power splitting strategy profile $\pmb \rho$. That is, $\rho_i^*$ is the maximizer of $u_i^{DF}\left(\pmb \rho\right)$. It can be easily observed that $\gamma _{i,1}^{DF}$ in (\ref{eq:DF_SNR_1st_hop}) is monotonically decreasing in $\rho_i$, while $\gamma _{i,2}^{DF}$ in (\ref{eq:DF_SNR_2rd_hop}) is monotonically increasing in $\rho_i$. Thus, $\rho_i^*$ must satisfy the following condition:
\begin{equation}
\gamma _{i,1}^{DF}\left(\rho_i^*\right) = \gamma _{i,2}^{DF}\left(\rho_i^*\right).
\end{equation}
Substituting the expression of $\gamma _{i,1}^{DF}$ and $\gamma _{i,2}^{DF}$, we have
\begin{equation}\label{eq:equality}
\begin{split}
\frac{{\left( {1 - {\rho _i^*}} \right){X_i}}}{{\left( {1 - {\rho _i^*}} \right){Y_i} + 1}} = \frac{{{\rho _i^*}{Z_i}}}{{{W_i} + 1}}.
\end{split}
\end{equation}
After rearranging (\ref{eq:equality}), we obtain
\begin{equation}\label{eq:quadratic_equality}
\begin{split}
&{Y_i}{Z_i}{\left( {\rho _i^*} \right)^2} - \left[ {{X_i}\left( {{W_i} + 1} \right) + {Y_i}{Z_i} + {Z_i}} \right]\rho _i^* \\
&\;\;\;\;\;\;\;\;\;\;\;\;\;\;\;\;\;\;\;\;+ {X_i}\left( { W_i  + 1} \right)= \ell_i
\left(\rho _i^*\right) = 0.
\end{split}
\end{equation}
Note that (\ref{eq:quadratic_equality}) is a quadratic equality of ${\rho _i^*}$, denoted by $\ell_i\left(\rho _i^*\right) = 0$. Thus, the value of $\rho _i^*$ can be obtained by solving the quadratic equality in (\ref{eq:quadratic_equality}).

We note that
\begin{subequations}\label{eq:properity_quadratic_equality_DF}
\begin{align}
&{Y_i}{Z_i}>0,\\
&\ell_i\left(0\right) = {X_i}\left( {W_i  + 1} \right) >0,\\
&\ell_i\left(1\right) = -{Z_i} <0.
\end{align}
\end{subequations}
Based on (\ref{eq:properity_quadratic_equality_DF}), we can deduce that the quadratic equality $\ell_i\left(\rho _i^*\right) = 0$ admits two roots lying in $(0,1)$ and $(1,+\infty)$. Since the feasible set of the power splitting ratio is $[0,1]$, the valid value of $\rho _i^*$ can only be the smaller root of the equality $\ell_i\left(\rho _i^*\right) = 0$. Mathematically, we have,
\begin{equation}\label{eq:BR_function_appendix}
\begin{split}
\rho_i^* &= \left[ {\left( {{X_i}{W_i} + {X_i} + {Y_i}{Z_i} + {Z_i}} \right) - } \right. \\
&\left. {\sqrt {{{\left( {{X_i}{W_i} + {X_i} - {Y_i}{Z_i} + {Z_i}} \right)}^2} + 4{Y_i}Z_i^2} } \right]/\left( {2{Y_i}{Z_i}} \right)\\
&={{\mathcal B}_i^{DF}}\left(\pmb \rho  \right),
\end{split}
\end{equation}
which completes the proof.

\subsection{Proof of Proposition \ref{prop:NE_unique}}\label{append:Prop_2}

Similar to the proof of Proposition \ref{prop:NE_unique_AF} in Appendix \ref{append:Prop_NE_unique_AF}, the proof of this proposition follows by showing that the function ${\pmb{\mathcal B}^{DF}}\left( \pmb \rho  \right)$ of the formulated game is standard. In the following, we show that the function ${\pmb{\mathcal B}^{DF}}\left( \pmb \rho  \right)$ satisfies the three properties of the standard function.

\emph{(1) Positivity}: As shown in Appendix \ref{append:lemma_BR_function_DF}, for any player $i$ and any strategy profile $\pmb \rho$, the best response function ${{\mathcal B}}_i^{DF}\left( \pmb \rho  \right) $ is always larger than 0, which guarantees the positivity of the function ${\pmb{\mathcal B}^{DF}}\left( \pmb \rho  \right)$.

\emph{(2) Monotonicity}: Suppose $\pmb \rho$ and ${\pmb \rho}^\prime$ are two different strategy profiles and $\pmb \rho \ge {\pmb \rho}^\prime$. Then, the corresponding best response functions of any player $i$ can be written as
\begin{equation}\label{eq:app_1}
\begin{split}
{{\mathcal B}_i^{DF}}\left(\pmb \rho  \right)&= \left[ {\left( {{X_i}{W_i} + {X_i} + {Y_i}{Z_i} + {Z_i}} \right) - } \right. \\
&\left. {\sqrt {{{\left( {{X_i}{W_i} + {X_i} - {Y_i}{Z_i} + {Z_i}} \right)}^2} + 4{Y_i}Z_i^2} } \right]/\left( {2{Y_i}{Z_i}} \right),
\end{split}
\end{equation}
and
\begin{equation}\label{eq:app_2}
\begin{split}
{{\mathcal B}_i^{DF}}\left({\pmb \rho}^\prime  \right)&= \left[ {\left( {{X_i}{W_i^\prime} + {X_i} + {Y_i}{Z_i} + {Z_i}} \right) - } \right. \\
&\left. {\sqrt {{{\left( {{X_i}{W_i^\prime} + {X_i} - {Y_i}{Z_i} + {Z_i}} \right)}^2} + 4{Y_i}Z_i^2} } \right]/\left( {2{Y_i}{Z_i}} \right),
\end{split}
\end{equation}
where $X_i$, $Y_i$, $Z_i$, $W_i$ are defined in (\ref{eq:def_XYZW}), and ${W_i^\prime} = \sum\nolimits_{j = 1,j \ne i}^N {{\rho _j^\prime}\eta \left( {\sum\nolimits_{n = 1}^N {{P_n}{{\left| {{g_{nj}}} \right|}^2}} } \right){{\left| {{h_{ji}}} \right|}^2}}/ {\sigma ^2}$.

Analogous to the analyses in Appendix \ref{append:Prop_NE_unique_AF}, the proof of ${{\mathcal B}_i^{DF}}\left(\pmb \rho  \right) \ge{{\mathcal B}_i^{DF}}\left({\pmb \rho}^\prime  \right)$ is equivalent to proving that ${\textstyle{{\partial {{\mathcal B}_i^{DF}}\left( {{W_i}} \right)} \over {{W_i}}}} \ge 0$. Expanding ${\textstyle{{\partial {{\mathcal B}_i^{DF}}\left( {{W_i}} \right)} \over {{W_i}}}} \ge 0$, we have
\begin{equation}\label{eq:app_3}
\begin{split}
&\frac{{\partial {{\mathcal B}_i^{DF}}\left( {{W_i}} \right)}}{{{W_i}}} \\
&= \frac{{{X_i}}}{{2{Y_i}{Z_i}}}\left[ {1 - \frac{{{X_i}{W_i} + {X_i} - {Y_i}{Z_i} + {Z_i}}}{{\sqrt {{{\left( {{X_i}{W_i} + {X_i} - {Y_i}{Z_i} + {Z_i}} \right)}^2} + 4{Y_i}Z_i^2} }}} \right].
\end{split}
\end{equation}
Since the term ${{{X_i}}}/{{\left(2{Y_i}{Z_i}\right)}} >0$ and the term in the square bracket of (\ref{eq:app_3}) is always large than zero, we can claim that ${\textstyle{{\partial {{\mathcal B}_i^{DF}}\left( {{W_i}} \right)} \over {{W_i}}}} > 0$, which complete the proof of monotonicity.

\emph{(3) Scalability}: For any $\alpha >1$, we define the function $ {\mathcal F}_i\left( {\alpha ,{\pmb \rho} } \right) =  \alpha {{\mathcal B}}_i^{DF}\left( \pmb \rho  \right) - {{\mathcal B}}_i^{DF}\left( \alpha {\pmb \rho}  \right)$. Then, the proof of the scalability is equivalent to proving that ${\mathcal F}_i\left( {\alpha ,{\pmb \rho} } \right) > 0$ for any $\alpha >1$. Firstly, it is obvious that $ {\mathcal F}_i\left( {1 ,{\pmb \rho} } \right) = 0$. Thus, a sufficient condition for ${\mathcal F}_i\left( {\alpha ,{\pmb \rho} } \right) > 0$ is that ${\mathcal F}_i\left( {\alpha ,{\pmb \rho} } \right)$ is an increasing function of $\alpha$, i.e., ${\textstyle{{\partial {{\cal F}_i}\left( {\alpha ,{ \pmb \rho} } \right)} \over {\partial \alpha }}} > 0$. To proceed, we first derive the first-order and second-order partial derivatives of ${\mathcal F}_i\left( {\alpha ,{\pmb \rho} } \right)$ w.r.t $\alpha$ and obtain
\begin{equation}\label{eq:app_4_1}
\begin{split}
\frac{{\partial {{\cal F}_i}\left( {\alpha ,{\pmb \rho} } \right)}}{{\partial \alpha }} &= \frac{1}{{2{Y_i}{Z_i}}}\left\{ {{X_i} + {Y_i}{Z_i} + {Z_i}} \right.  \\
 &{ + \frac{{\left( {\alpha {X_i}{W_i} + {X_i} - {Y_i}{Z_i} + {Z_i}} \right){X_i}{W_i}}}{{\sqrt {{{\left( {\alpha {X_i}{W_i} + {X_i} - {Y_i}{Z_i} + {Z_i}} \right)}^2} + 4{Y_i}{Z_i}^2} }}} \\
&\left. - \sqrt {{{\left( {{X_i}{W_i} + {X_i} - {Y_i}{Z_i} + {Z_i}} \right)}^2} + 4{Y_i}Z_i^2} \right\},
\end{split}
\end{equation}
\begin{equation}\label{eq:app_4}
\begin{split}
\frac{{{\partial ^2}{{\cal F}_i}\left( {\alpha ,{\pmb \rho} } \right)}}{{\partial {\alpha ^2}}} &= \frac{{2{Z_i}{{\left( {{X_i}{W_i}} \right)}^2}}}{\left[{{{\left( {\alpha{X_i}{W_i} + {X_i} - {Y_i}{Z_i} + {Z_i}} \right)}^2} + 4{Y_i}Z_i^2}\right]^{3/2}} \\
%&> 0.
\end{split}
\end{equation}
From (\ref{eq:app_4}), we can see that $\frac{{{\partial ^2}{{\cal F}_i}\left( {\alpha ,{\pmb \rho} } \right)}}{{\partial {\alpha ^2}}}$ is always larger than 0, which indicates that ${\textstyle{{\partial {{\cal F}_i}\left( {\alpha ,{ \pmb \rho} } \right)} \over {\partial \alpha }}}$ is increasing in $\alpha$. Thus, a sufficient condition for ${\mathcal F}_i\left( {\alpha ,{\pmb \rho} } \right) > 0$ can now be simplified as ${\left. {{\textstyle{{\partial {{\cal F}_i}\left( {\alpha ,{\pmb \rho} } \right)} \over {\partial \alpha }}}} \right|_{\alpha  = 1}} > 0$. Substituting $\alpha = 1$ into (\ref{eq:app_4_1}), we get
\begin{equation}\label{eq:app_5}
\begin{split}
&{\left. {\frac{{\partial {{\cal F}_i}\left( {\alpha ,{\pmb \rho} } \right)}}{{\partial \alpha }}} \right|_{\alpha {\rm{ = }}1}} = \frac{1}{{2{Y_i}{Z_i}}} \left\{ {{X_i} + {Y_i}{Z_i} + {Z_i}} \right.\\
 &~~~~~~~~~~+ \frac{{\left( {{X_i}{W_i} + {X_i} - {Y_i}{Z_i} + {Z_i}} \right){X_i}{W_i}}}{{\sqrt {{{\left( {{X_i}{W_i} + {X_i} - {Y_i}{Z_i} + {Z_i}} \right)}^2} + 4{Y_i}Z_i^2} }}\\
 &~~~~~~~~~~\left.- \sqrt {{{\left( {{X_i}{W_i} + {X_i} - {Y_i}{Z_i} + {Z_i}} \right)}^2} + 4{Y_i}Z_i^2}\right\}.
\end{split}
\end{equation}

To proceed, we derive the first-order derivative for the RHS of (\ref{eq:app_5}) with respect to $W_i$. After some algebraic manipulations, we obtain
\begin{equation*}
\begin{split}
&\partial \left( {{{\left. {{\textstyle{{\partial {{\cal F}_i}\left( {\alpha ,{\pmb \rho} } \right)} \over {\partial \alpha }}}} \right|}_{\alpha  = 1}}} \right)/\partial {W_i} \\
&= \frac{{2{Z_i}{W_i}X_i^2}}{{{{\left[ {{{\left( {{X_i}{W_i} + {X_i} - {Y_i}{Z_i} + {Z_i}} \right)}^2} + 4{Y_i}Z_i^2} \right]}^{3/2}}}},
\end{split}
\end{equation*}
which is shown to be always positive. Thus, ${\left. {\frac{{\partial {{\cal F}_i}\left( {\alpha ,{\pmb \rho} } \right)}}{{\partial \alpha }}} \right|_{\alpha {\rm{ = }}1}}$ is an increasing function in $W_i$. Since $W_i > 0$, we further have
\begin{equation}\label{eq:app_6}
\begin{split}
{\left. {\frac{{\partial {F_i}\left( {\alpha ,{\pmb \rho} } \right)}}{{\partial \alpha }}} \right|_{\alpha {\rm{ = }}1}} &> {\left. {\frac{{\partial {F_i}\left( {\alpha ,{\pmb \rho} } \right)}}{{\partial \alpha }}} \right|_{\alpha  = 1,{W_i} = 0}}\\
&= \frac{1}{{2{Y_i}{Z_i}}}\left\{ {{X_i} + {Y_i}{Z_i} + {Z_i}} \right.\\
&~~~~\left.- \sqrt {{{\left( {{X_i} + {Y_i}{Z_i} + {Z_i}} \right)}^2} - 4{X_i}{Y_i}Z_i}\right\}\\
&>0.
\end{split}
\end{equation}
Therefore, we can claim that $\alpha {{\mathcal B}}_i^{DF}\left( \pmb \rho  \right) > {{\mathcal B}}_i^{DF}\left( \alpha {\pmb \rho}  \right)$, which completes the proof.

\end{appendix}
\section{Acknowledgement}
The authors would like to thank the anonymous reviewers for
their valuable comments and suggestions, which improved the
quality of the paper. The authors also thank Dr. Peng Wang for his helpful discussion.

\ifCLASSOPTIONcaptionsoff
  \newpage
\fi

\bibliographystyle{IEEEtran}
\bibliography{References}

%\newpage
%

\end{document}